\newcommand\qed{\hfill\ensuremath{\Box}\smallskip}
\newtheorem{theorem}{Theorem}[section]
\newtheorem{corollary}[theorem]{Corollary}
\newtheorem{lemma}[theorem]{Lemma}
\newtheorem{construction}[theorem]{Construction}
\newenvironment{proof}{
        \noindent {\bf Proof: }}{ }
\newcommand\buchi{B\"uchi}
\newcommand\cobuchi{Co-B\"uchi}
\newcommand\A{\mathcal A}
\newcommand\B{\mathcal B}
\newcommand\C{\mathcal C}
\newcommand\D{\mathcal D}
\newcommand\E{\mathcal E}
\newcommand\F{\mathcal F}
\newcommand\G{\mathcal G}
\newcommand\scL{\mathcal L}
\newcommand\scP{\mathcal P}
\renewcommand\uplus{\dot{\cup}}
\begin{document}

\sloppy

\title{\vspace*{-4mm}Minimisation of Deterministic Parity and \buchi\ Automata and\\Relative Minimisation of Deterministic Finite Automata%
\footnote{This work was partly supported by the Engineering and Physical Science Research Council (EPSRC) through the grant EP/H046623/1 `Synthesis and Verification in Markov Game Structures'; it is an extended version of \cite{Schewe/10/minimise}.}}

\author{Sven Schewe\\University of Liverpool\\\tt sven.schewe@liverpool.ac.uk}

\date{}

\maketitle

\begin{abstract}
In this report we study the problem of minimising deterministic automata over finite and infinite words.
Deterministic finite automata are the simplest devices to recognise regular languages, and deterministic \buchi, \cobuchi, and parity automata play a similar role in the recognition of $\omega$-regular languages.
While it is well known that the minimisation of deterministic finite and weak automata is cheap, the complexity of minimising deterministic \buchi\ and parity automata has remained an open challenge.
We establish the NP-completeness of these problems.

A second contribution of this report is the introduction of almost equivalence, an equivalence class for strictly between language equivalence for deterministic \buchi\ or \cobuchi\ automata and language equivalence for deterministic finite automata.
Two finite automata are almost equivalent if they, when used as a monitor, provide a different answer only a bounded number of times in any run,
and we call the minimal such automaton relatively minimal.
Minimisation of DFAs, hyper-minimisation, relative minimisation, and the minimisation of deterministic \buchi\ (or \cobuchi) automata are operations of increasing reduction power, as the respective equivalence relations on automata become coarser from left to right.
Besides being a natural equivalence relation for finite automata, almost equivalence is language preserving for weak automata, and can therefore also be viewed as a generalisation of language equivalence for weak automata to a more general class of automata.
From the perspective of \buchi\ and \cobuchi\ automata, we gain a cheap algorithm for state-space reduction that also turns out to be beneficial for further heuristic or exhaustive state-space reductions put on top of it.
\end{abstract}

\section{Introduction}

The minimisation of deterministic finite automata (DFAs) is a classic problem with an efficient solution \cite{Hopcroft/70/minDFA,Hopcroft+Ullman}.
This report was originally written with only the question in mind of whether or not a similar result can be obtained for deterministic automata over infinite words.
Is their minimisation tractable?
For weak automata, the answer is known to be positive \cite{Loding/01/weak}, which seems to encourage a quest for a tractable solution for \buchi, \cobuchi, and parity automata as well.
However, it turns out that their minimisation is intractable (NP-complete).

This raised the question whether there are natural tractable problems between the minimisation of DFAs and deterministic \buchi\ automata (DBAs) or deterministic \cobuchi\ automata (DCAs).
The hyper-minimisation of deterministic automata \cite{BGS/09/hypermin,Badr/09/hypermin,GJ/09/hypermin,Holzer+Maletti/10/hypermin} is such an example: If we minimise a DFA while allowing for a finite symmetrical difference between the language of the source and target automaton, we might be rewarded by a smaller automaton.

We introduce a second relaxation, almost equivalence, where we require that acceptance differs only on finitely many prefixes of every infinite word.
This provides the guarantee that, on each infinite run, the result is equivalent in almost all positions (cf.\ Figure \ref{fig:minimisations}), which is not only interesting in itself, but can also be viewed as a generalisation of the minimisation problem of weak automata \cite{Loding/01/weak} to a more general class.

\begin{figure*}
\label{fig:minimisations}
\begin{center}

\psset{yunit=15mm,xunit=15mm}
\begin{pspicture}(-1.6,-.35)(1.65,3.5)

\rput(-1,0.2){\circlenode[doubleline=true]{0}{$\quad$}}

\rput(0,0){\circlenode{1}{$\quad$}}
\rput(1,0){\circlenode{2}{$\quad$}}

\rput(0,1){\circlenode{3}{$\quad$}}
\rput(1,1){\circlenode[doubleline=true]{4}{$\quad$}}

\pnode(-1.5,0.2){a}
\ncline{->}{a}{0}

\rput(0,-.55){(a)}

\ncarc[arcangle=-40,npos=.3]{->}{0}{2}
\nbput[npos=.3]{$a$}
\ncline{->}{0}{3}
\naput{$b$}

\ncarc{->}{1}{2}
\naput{$a$}
\ncarc{->}{2}{1}
\naput{$a$}

\ncarc{->}{3}{4}
\naput{$a$}
\ncarc{->}{4}{3}
\naput{$a$}

\ncline{->}{1}{3}
\naput{$b$}
\ncline{->}{2}{4}
\nbput{$b$}

%%%

\rput(0,2){\circlenode{1}{$\quad$}}
\rput(1,2){\circlenode[doubleline=true]{2}{$\quad$}}

\ncline{->}{3}{1}
\naput{$b$}
\ncline{->}{4}{2}
\nbput{$b$}

\rput(0,3){\circlenode{3}{$\quad$}}
\rput(1,3){\circlenode[doubleline=true]{4}{$\quad$}}

\ncline{->}{1}{3}
\naput{$b$}
\ncline{->}{2}{3}
\naput{$b$}

\ncarc{->}{3}{4}
\naput{$a$}
\ncarc{->}{4}{3}
\naput{$a$}
\end{pspicture}
\begin{pspicture}(-.65,-.35)(1.65,1.5)

\rput(0,0){\circlenode{1}{$\quad$}}
\rput(1,0){\circlenode{2}{$\quad$}}

\rput(0,1){\circlenode{3}{$\quad$}}
\rput(1,1){\circlenode[doubleline=true]{4}{$\quad$}}

\pnode(-0.5,0){a}
\ncline{->}{a}{1}

\rput(0.5,-.55){(b)}

\ncarc{->}{1}{2}
\naput{$a$}
\ncarc{->}{2}{1}
\naput{$a$}

\ncarc{->}{3}{4}
\naput{$a$}
\ncarc{->}{4}{3}
\naput{$a$}

\ncline{->}{1}{3}
\naput{$b$}
\ncline{->}{2}{4}
\nbput{$b$}

%%%

\rput(0,2){\circlenode{1}{$\quad$}}
\rput(1,2){\circlenode[doubleline=true]{2}{$\quad$}}

\ncline{->}{3}{1}
\naput{$b$}
\ncline{->}{4}{2}
\nbput{$b$}

\rput(0,3){\circlenode{3}{$\quad$}}
\rput(1,3){\circlenode[doubleline=true]{4}{$\quad$}}

\ncline{->}{1}{3}
\naput{$b$}
\ncline{->}{2}{3}
\naput{$b$}

\ncarc{->}{3}{4}
\naput{$a$}
\ncarc{->}{4}{3}
\naput{$a$}
\end{pspicture}
\begin{pspicture}(-.65,-.35)(1.65,1.5)

\rput(0,0){\circlenode{1}{$\quad$}}
\rput(1,0){\circlenode{2}{$\quad$}}

\rput(0,1){\circlenode{3}{$\quad$}}
\rput(1,1){\circlenode[doubleline=true]{4}{$\quad$}}

\pnode(-0.5,0){a}
\ncline{->}{a}{1}

\rput(0.5,-.55){(c)}

\ncarc{->}{1}{2}
\naput{$a$}
\ncarc{->}{2}{1}
\naput{$a$}

\ncarc{->}{3}{4}
\naput{$a$}
\ncarc{->}{4}{3}
\naput{$a$}

\ncline{->}{1}{3}
\naput{$b$}
\ncline{->}{2}{4}
\nbput{$b$}

%%%

\rput(0.5,2){\circlenode[doubleline=true]{2}{$\quad$}}

\ncline{->}{3}{2}
\naput{$b$}
\ncline{->}{4}{2}
\nbput{$b$}

\rput(0,3){\circlenode{3}{$\quad$}}
\rput(1,3){\circlenode[doubleline=true]{4}{$\quad$}}

\ncline{->}{2}{3}
\naput{$b$}

\ncarc{->}{3}{4}
\naput{$a$}
\ncarc{->}{4}{3}
\naput{$a$}
\end{pspicture}
\begin{pspicture}(-.65,-.35)(1.6,1.5)

\rput(0,0){\circlenode{1}{$\quad$}}

\rput(0,1){\circlenode[doubleline=true]{4}{$\quad$}}

\pnode(-0.5,0){a}
\ncline{->}{a}{1}

\rput(0,-.55){(d)}

\nccircle[angleA=270]{->}{1}{0.3}
\nbput{$a$}

\ncline{->}{1}{4}
\naput{$b$}

\nccircle[angleA=270]{->}{4}{0.3}
\nbput{$a$}

%%%

\rput(0,2){\circlenode{2}{$\quad$}}

\ncline{->}{4}{2}
\naput{$b$}

\rput(0,3){\circlenode[doubleline=true]{4}{$\quad$}}

\ncline{->}{2}{4}
\naput{$b$}

\nccircle[angleA=270]{->}{4}{0.3}
\nbput{$a$}
\end{pspicture}

\end{center}

\caption{%
Figure \ref{fig:minimisations}a shows a minimal DFA $\mathcal A$ over the two letter alphabet $\{a,b\}$.
Figure \ref{fig:minimisations}b shows a hyperminisation of $\mathcal A$,
Figure \ref{fig:minimisations}c shows a minimal almost equivalent automaton to $\mathcal A$, and
Figure \ref{fig:minimisations}d shows a minimal language equivalent DBA to $\mathcal A$.
(Neither hyperminimal nor relative minimal automata need to be unique.)
Hyperminimal automata are the minimal automata with a finite symmetric language difference to the source automaton \cite{BGS/09/hypermin,Badr/09/hypermin,GJ/09/hypermin}; they may only differ from the minimal automaton in the preamble.
(The trivial SCCs of the automaton reachable from the initial state.)
Minimal almost equivalent automata only guarantee that the symmetrical difference intersected with the prefixes of every infinite word are finite.
(In both cases, finite implies bounded by the number of states of the automaton.)
For weak automata---automata whose language is equivalent when read as DBA or DCA---a minimal almost equivalent automaton is also a minimal weak automaton.}
\end{figure*}
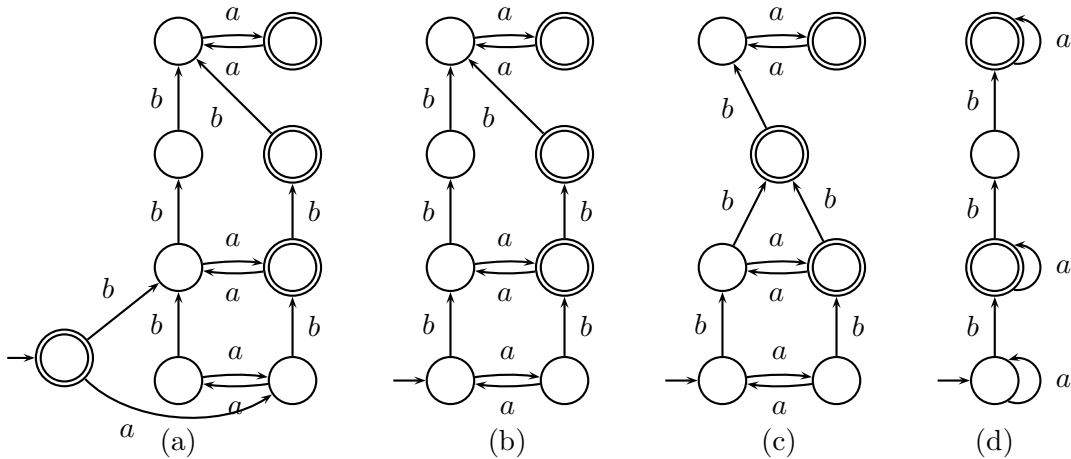

This is a natural notion of almost equivalence on DFAs, which also forms a promising basis for state-space reduction of \buchi\ and \cobuchi\ automata.
Different to the NP-completeness of minimising \buchi\ and \cobuchi\ automata, we show that finding a minimal almost equivalent DFA is cheap.
It is also a useful starting point for a state-space reduction of a DBA or DCA $\A$, because minimisation with respect to almost equivalence (like minimisation and hyper-minimisation) of $\A$ when read as a DFA are language preserving.

The algorithm we develop for finding a minimal almost equivalent DFA allows for more:
It can be strengthened by using language equivalence on $\A$ (when read as a \buchi\ or \cobuchi\ automata) in the algorithm, which provides for a smaller---yet still language equivalent---target automaton, and this automaton comes with the interesting property that one can focus on its strongly connected components (SCCs) in isolation when trying to reduce its state-space further.

While the NP-completeness of the minimisation problem of DBAs, DCAs, and deterministic parity automata (DPAs) seems to rule out the use of state-space reduction on large scale problems, this reduction technique therefore suggests that one might often get far on the way of reducing the state-space without having to pay a high price, while getting for free a division of the remaining potential parts of the automaton for further reduction.

This is fortunate, because the standard verification technique for the verification of Markov decision processes against LTL specifications \cite{Baier+Katoen/08/ModelChecking} as well as the synthesis of distributed systems from LTL specifications \cite{Rabin/72/Automata,Pnueli+Rosner/89/Synthesis,Pnueli+Rosner/89/Asynchronous,Schewe+Finkbeiner/06/Asynchronous,Piterman/07/Parity,Schewe/09/determinise}
require working with these deterministic $\omega$-automata, and techniques for the minimisation, or, indeed, for the state-space reduction of the automata involved are more than welcome.
The argument in favour of such reductions becomes even stronger for algorithms that synthesise distributed systems \cite{Pnueli+Rosner/90/Distributed,Kupferman+Vardi/01/Synthesizing,Madhusudan+Thiagarajan/01/Local,Walukiewicz+Mohalik/03/Distributed,Finkbeiner+Schewe/05/Distributed}, where deterministic automata occur in various steps of the construction.

\paragraph{\bf Organisation of the Report.}
In the following section, we introduce the basic notions of deterministic automata over finite and infinite words.
In Section \ref{sec:NPC} we establish the NP-completeness of the minimisation problems for deterministic \buchi, \cobuchi, and parity automata.
In Section \ref{sec:tractable} we introduce the problem of relative minimisation of deterministic finite automata, show that it is more powerful than hyper-minimisation and can be used to reduce the state-space of a deterministic \buchi\ or \cobuchi\ automaton, and develop an algorithm for the relative minimisation of deterministic finite automata.
We then show that this algorithm can be strengthened further when used for reducing the state-space of \buchi\ and \cobuchi\ automata, and that the structure of the resulting automaton is beneficial for further state-space reductions.

\section{Deterministic Automata}

\paragraph{$\omega$-Automata.}
Parity automata are word automata that recognise the $\omega$-regular languages over finite set of symbols.
A \emph{deterministic parity automaton} is a tuple $\scP = (\Sigma,Q,q_0,\delta,\pi)$, where
\begin{itemize}
\item $\Sigma$ denotes a finite set of symbols,
\item $Q$ denotes a finite set of states,
\item $q_0 \in Q_+$ with $Q_+ = Q \uplus \{\bot,\top\}$ denotes a designated initial state,
\item $\delta: Q_+ \times \Sigma \rightarrow Q_+$ is a function that maps pairs of states and input letters to either a new state, or to $\bot$ (false, immediate rejection, blocking) or $\top$ (true, immediate acceptance)%
\footnote{The question whether or not an automaton can immediately accept or reject is a matter of taste. Often, immediate rejection is covered by allowing $\delta$ to be partial while there is no immediate acceptance.
For technical convenience, we allow both, but treat $\top$ and $\bot$ as accepting and rejecting sink states, respectively.},
such that $\delta(\top,\sigma)=\top$ and $\delta(\bot,\sigma)=\bot$ hold for all $\sigma\in \Sigma$, and

\item $\pi: Q_+ \rightarrow P \subset \mathbb{N}$ is a priority function that maps states to natural numbers (mapping $\bot$ and $\top$ to an odd and even number, respectively), called their \emph{priority}.
(They are often referred to as colours.)
\end{itemize}

Parity automata read infinite input words $\alpha = a_0 a_1 a_2 \ldots \in \Sigma^\omega$.
(As usual, $\omega = \mathbb N_0$ denotes the non-negative integers.)
Their acceptance mechanism is defined in terms of runs:
The unique run $\rho =r_0r_1r_2\ldots\in {Q_+}^\omega$ of $\scP$ on $\alpha$ is the $\omega$-word that satisfies $r_0=q_0$ and, for all $i\in \omega$, $r_{i+1}=\delta(r_i,a_i)$.
A run is called \emph{accepting} if the highest number occurring infinitely often in the infinite sequence $\pi(r_0)\pi(r_1)\pi(r_2)\ldots$ is even, and \emph{rejecting} if it is odd.
An $\omega$-word is \emph{accepted} by $\scP$ if its run is accepting.
The set of $\omega$-words accepted by $\scP$ is called its \emph{language}, denoted $\scL(\scP)$.

We assume without loss of generality that $\max P \leq |Q|+1$.
(If a priority $p \succeq 2$ does not exist, we can reduce the priority of all states whose priority is strictly greater than $p$ by $2$ without affecting acceptance.)

Deterministic \buchi\ and \cobuchi\ automata---abbreviated DBAs and DCAs---are DPAs where the image of the priority function $\pi$ is contained in $\{1,2\}$ and $\{2,3\}$, respectively. In both cases, the automaton is often denoted $\A=(\Sigma,Q,q_0,\delta,F)$, where $F \subseteq Q_+$ denotes those states with priority $2$. The states in $F$ are also called \emph{final} or \emph{accepting states}, while the remaining states $Q_+\smallsetminus F$ are called \emph{rejecting} states.

\paragraph{Finite Automata.}
Finite automata are word automata that recognise the regular languages over finite set of symbols.
A \emph{deterministic finite automaton} (DFA) is a tuple $\F = (\Sigma,Q,q_0,\delta,F)$, where $\Sigma$, $Q$, $q_0$, and $\delta$ are defined a for DPAs, and $F \subseteq Q \uplus \{\top\}$ is a set of \emph{final states} that contains $\top$ (but not $\bot$).

Finite automata read finite input words $\alpha = a_0 a_1 a_2 \ldots a_n \in \Sigma^*$.
Their acceptance mechanism is again defined in terms of runs:
The unique run $\rho =r_0r_1r_2\ldots r_{n+1} \in {Q_+}^+$ of $\F$ on $\alpha$ is the word that satisfies $r_0=q_0$ and, for all $i\leq n$, $r_{i+1}=\delta(r_i,a_i)$.
A run is called \emph{accepting} if it ends in a final state (and \emph{rejecting} otherwise), a word is \emph{accepted} by $\F$ if its run is accepting, and the set of words accepted by $\F$ is called its \emph{language}, denoted $\scL(\F)$.

\paragraph{Automata Transformations \& Conventions.}
For a deterministic automaton $\A=(\Sigma,Q,q_0,\delta,F)$ or $\A=(\Sigma,Q,q_0,\delta,\pi)$ and a state $q \in Q_+$, we denote with $\A_q=(\Sigma,Q,q,\delta,F)$ or $\A_q=(\Sigma,Q,q,\delta,\pi)$, respectively, the automaton resulting from $\A$ by changing the initial state to $q$.
We also read finite automata at times as \buchi\ (or \cobuchi) automata and \buchi\ (or \cobuchi) automata as finite automata in the constructions, and let DFAs run on infinite words where this is convenient and its meaning is clear in the context.

Automata define a directed graph whose unravelling from the initial state defines the possible runs.
For an automaton $\A=(\Sigma,Q,q_0,\delta,F)$ or $\A=(\Sigma,Q,q_0,\delta,\pi)$, this is the directed graph $(Q_+,T)$ with $T=\{(p,q) \in Q_+ \times Q_+ \mid \exists \sigma \in \Sigma.\ \delta(p,\sigma)=q\}$.
When referring to the reachable states (which always means reachable from the initial state) and SCCs of an automaton, this refers to this graph.

\paragraph{\bf Emptiness and Equivalence.}

A DPA is called \emph{empty} if its language is empty and \emph{universal} if it accepts every word $\alpha\in \Sigma^\omega$.
For two automata $\scP^1 = (\Sigma,Q_1,q_0^1,\delta_1,\pi_1)$ and $\scP^2 = (\Sigma,Q_2,q_0^2,\delta_2,\pi_2)$, two states $q_1 \in Q_1$ and $q_2 \in Q_2$ are called \emph{equivalent} if $\scL(\scP^1_{q_1})=\scL(\scP^2_{q_2})$. (Equivalence of states naturally extends to the same automaton, as $\scP^1$ and $\scP^2$ are not necessarily different.)
Two automata are equivalent if their initial states are equivalent. (Or, likewise, if they recognise the same language.)

Emptiness, universality, and equivalence of parity, \buchi, and \cobuchi\ automata is computationally easy:

\begin{theorem}
\label{theo:nlc}
Language non-inclusion of two parity automata $\scP^1 = (\Sigma,Q_1,q_0^1,\delta_1,\pi_1)$ and $\scP^2 = (\Sigma,Q_2,q_0^2,\delta_2,\pi_2)$ can be checked in non-deterministic logarithmic space.
\end{theorem}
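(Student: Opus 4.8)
The plan is to reduce non-inclusion to a reachability question on the deterministic product of the two automata and then to check that this reachability question is solvable in $\mathrm{NL}$. Since the statement is about language \emph{non}-inclusion, I aim directly at $\mathrm{NL}$ and do not need the Immerman--Szelepcs\'enyi theorem to close the gap.

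First I would work with the (conceptual) product whose states are pairs $(p,q)$, with $p$ a state of $\scP^1$ and $q$ a state of $\scP^2$ (both allowing the sinks $\bot,\top$), and whose transition on a letter $\sigma$ sends $(p,q)$ to $(\delta_1(p,\sigma),\delta_2(q,\sigma))$. Because both automata are deterministic, a word $\alpha\in\Sigma^\omega$ induces a unique run in the product, and $\alpha$ witnesses non-inclusion exactly when that run makes $\scP^1$ accept and $\scP^2$ reject; that is, the largest first-component priority occurring infinitely often is even while the largest second-component priority occurring infinitely often is odd. The sink states cause no trouble here, since $\delta(\top,\sigma)=\top$ and $\delta(\bot,\sigma)=\bot$ together with the parity conventions on $\top$ and $\bot$ simply fix their contribution.

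The key structural step is to replace an arbitrary witness by a \emph{lasso}. I claim non-inclusion holds iff the product contains a cycle $C$, reachable from $(q_0^1,q_0^2)$, with $\max\{\pi_1(s):s\in C\}$ even and $\max\{\pi_2(s):s\in C\}$ odd. One direction is immediate: follow a finite initial segment to $C$ and then traverse $C$ forever. For the converse, the set of product states visited infinitely often along any witnessing run is reachable and strongly connected, and its first- and second-component maxima have the required parities; unravelling a closed walk through the two extremal states produces such a $C$. This is the standard fact that the combined condition defines an $\omega$-regular language whose nonemptiness over a finite graph is witnessed by an ultimately periodic word. I expect this characterisation to be the main obstacle, mostly in arguing cleanly that a single cycle can realise \emph{both} max-parity requirements simultaneously and in checking the degenerate cases (reachable sinks, and the possibility $\scP^1=\scP^2$).

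With the characterisation in hand, the $\mathrm{NL}$ algorithm guesses an even threshold $c_1$ and an odd threshold $c_2$, both at most $\max(|Q_1|,|Q_2|)+1$ and hence storable in logarithmic space, and guesses an anchor state $s=(p,q)$. It then verifies that $s$ is reachable from $(q_0^1,q_0^2)$, and guesses a closed walk from $s$ back to $s$ that stays inside the set of states with $\pi_1\le c_1$ and $\pi_2\le c_2$, raising one flag upon meeting a state with $\pi_1=c_1$ and another upon meeting a state with $\pi_2=c_2$; it accepts if both flags are set on return to $s$. Each transition is computed on the fly from $\delta_1$ and $\delta_2$, so the product is never materialised, and at any instant the machine stores only $s$, the current state, the two thresholds, and two bits, i.e.\ $O(\log(|Q_1|+|Q_2|+|\Sigma|))$ space. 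Reachability and the chained closed-walk search are the canonical $\mathrm{NL}$ primitives, so the entire procedure runs in non-deterministic logarithmic space, which is what the theorem asserts. \qed
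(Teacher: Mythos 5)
Your proposal is correct and follows essentially the same route as the paper: both arguments characterise non-inclusion by a reachable cycle in the on-the-fly product whose maximal first-component priority is even and maximal second-component priority is odd, and then verify this with a logspace-bounded nondeterministic lasso search. The only difference is bookkeeping---the paper tracks the running maximum priority of each component along the guessed loop and checks its parity on closing the cycle, whereas you guess the two extremal priorities up front and confirm them with flags---which does not change the substance of the argument.
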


\begin{proof}
We describe how to check non-emptiness of $\scL(\scP^1)\smallsetminus\scL(\scP^2)$. ($\scL(\scP^2)\smallsetminus\scL(\scP^1)$ can be checked accordingly, and a non-deterministic machine can guess which to check.)

$\scL(\scP^1)\smallsetminus\scL(\scP^2)$ is non-empty if there is a word $\alpha = a_0 a_1 a_2 \ldots \in \Sigma^\omega$ such that the run $r_0^1 r_1^1 r_2^1\ldots$ of $\scP^1$ on $\alpha$ is accepting, while the run and $r_0^2 r_1^2 r_2^2\ldots$ of $\scP^2$ on $\alpha$ is rejecting.

A necessary condition for this is that there are positions $i \leq j$ such that $r_i^1=r_{j+1}^1$, $r_i^2=r_{j+1}^2$, and the highest priority in $r_i^1 r_{i+1}^1 \ldots r_j^1$ is even, while the highest priority in $r_i^2 r_{i+1}^2 \ldots r_j^2$ is odd.
However, the existence of two runs with this property is also a sufficient condition for $\scL(\scP^1) \nsubseteq \scL(\scP^2)$, because the word $\alpha'=a_0 a_1 \ldots a_{i-1} (a_i a_{1+1} \ldots a_j)^\omega$ is accepted by $\scP^1$ and rejected by $\scP^2$.

Consequently, we can use a non-deterministic machine that guesses $\alpha'$ on the fly, and guesses when $i$ and $j$ are reached.
All that this machine needs to store is
\begin{itemize}
\item the current state of $\scP^1$ and $\scP^2$,
\item the current guessed input letter,
\item once $i$ is reached (guessed): $r_i^1$ and $r_i^2$, and
\item the highest priority seen since it guessed being in position $i$ for both runs.

(Initialising the values to $\pi(r_i^1)$ and $\pi(r_i^2)$, and computing and storing the maximum of the current value and the priority of the respective current state.)
\end{itemize}
Upon reaching $j$ (guessed), the machine checks if the highest priority stored is even for the run of $\scP_1$, and odd for $\scP_2$.

The overall memory required is logarithmic in the size of the automaton.
\qed
\end{proof}

Reachability in a directed graph can obviously be reduced in deterministic logspace to checking language non-emptiness of a \buchi\ or \cobuchi\ automaton with only rejecting states, or non-universality of a \buchi\ or \cobuchi\ automaton with only accepting states, respectively.
Likewise, testing universality or emptiness can be reduced in deterministic logspace to checking language inclusion (in the respective direction) with a trivial \buchi\ or \cobuchi\ automaton that immediately changes to $\top$ or $\bot$, respectively, for every input letter.
With the fact that NL is closed under complementation \cite{Immerman/88/NCoNL}, this immediately implies:

\begin{corollary}
\label{cor:NLC}
Language inclusion, equivalence, emptiness, and universality of parity, \buchi, and \cobuchi\ automata and their co-problems are NL-complete.
\qed
\end{corollary}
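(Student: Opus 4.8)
The plan is to prove both directions of NL-completeness, treating membership and hardness separately, and to use the closure of NL under complementation throughout so that each co-problem is settled for free rather than argued by hand.

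For membership I would start from Theorem \ref{theo:nlc}, which already places language non-inclusion of parity automata in NL; since $\buchi$ and $\cobuchi$ automata are syntactic special cases of parity automata, the same bound applies to them without extra work. Applying the Immerman--Szelepcs\'enyi theorem ($\mathrm{NL}=\mathrm{coNL}$) then yields language inclusion in NL as well. Emptiness and universality follow immediately from the observation made just before the statement: $\scL(\scP)=\emptyset$ is exactly inclusion of $\scP$ in the trivial automaton that moves to $\bot$ on every letter, and universality is inclusion of the trivial automaton that moves to $\top$ on every letter in $\scP$; both translations are deterministic-logspace and NL is closed under such reductions. Equivalence is the conjunction of the two inclusions $\scL(\scP^1)\subseteq\scL(\scP^2)$ and $\scL(\scP^2)\subseteq\scL(\scP^1)$, and since NL is closed under intersecting a constant number of NL predicates, equivalence lies in NL; its co-problem is then in NL by $\mathrm{NL}=\mathrm{coNL}$ (or, more directly, by guessing which of the two inclusions fails and invoking Theorem \ref{theo:nlc}).

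For hardness I would reduce from graph reachability, the canonical NL-complete problem. The paragraph preceding the statement already supplies a deterministic-logspace reduction from reachability to non-emptiness of a $\buchi$ or $\cobuchi$ automaton with only rejecting states (whose sole accepting behaviour is a transition into $\top$), and dually to non-universality of an automaton with only accepting states (whose sole rejecting behaviour is a transition into $\bot$). This makes non-emptiness and non-universality NL-hard at once. The remaining problems inherit hardness through the same special-case relationships used for membership: emptiness is inclusion in the $\bot$-automaton and universality is inclusion of the $\top$-automaton, so inclusion is NL-hard; and checking $\scL(\scP)=\emptyset$ is equivalence with the $\bot$-automaton, so equivalence is NL-hard. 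A single appeal to $\mathrm{NL}=\mathrm{coNL}$ then transfers every hardness (and membership) result to the corresponding co-problem, settling all cases simultaneously.

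The whole argument is essentially bookkeeping once Theorem \ref{theo:nlc} and the reachability reduction are available, so I do not expect a genuine obstacle. The only point requiring care is tracking which member of each complementary pair is naturally in NL (non-inclusion) versus which is naturally NL-hard (non-emptiness, non-universality), and invoking $\mathrm{NL}=\mathrm{coNL}$ exactly once per pair instead of re-deriving each direction; verifying that the constant-query composition used for equivalence stays within logarithmic space is routine, as NL is closed under logspace many-one reductions and under finite intersection.
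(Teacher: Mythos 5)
Your proposal is correct and follows essentially the same route as the paper: membership via Theorem \ref{theo:nlc} together with $\mathrm{NL}=\mathrm{coNL}$, and hardness via the deterministic-logspace reduction from graph reachability to non-emptiness/non-universality, with emptiness, universality, and equivalence tied to inclusion through the trivial $\top$- and $\bot$-automata. The paper's argument is exactly this bookkeeping (stated in the paragraph preceding the corollary), so there is nothing to add.
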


\section{Minimising \buchi\ and Parity Automata is NP-Complete}
\label{sec:NPC}
In this section we show that the minimisation of deterministic \buchi, \cobuchi, and parity automata are NP-complete problems.
This is in contrast to the tractable minimisation of finite~\cite{Hopcroft/70/minDFA} and weak automata~\cite{Loding/01/weak}.

The hardest part of the NP-completeness proof is a reduction from the problem of finding a minimal vertex cover of a graph to the minimisation of deterministic \buchi\ automata.
For this reduction, we first define the characteristic language of a simple connected graph.
For technical convenience we assume that this graph has a distinguished initial vertex.

We show that the states of a deterministic \buchi\ automaton that recognises this characteristic language must satisfy side-constraints, which imply that it has at least $2n+k$ states, where $n$ is the number of vertices of the graph, and $k$ is the size of its minimal vertex cover.
We then show that, given a vertex cover of size $k$, it is simple to construct a deterministic \buchi\ automaton of size $2n+k$ that recognises the characteristic language of this graph. (It can be constructed in linear time and logarithmic space.)
Furthermore, we show that minimising the automaton defined by the trivial vertex cover can be used to determine a minimal vertex cover for this graph, which concludes the reduction.

We call a non-trivial ($|V|>1$) simple connected graph $\G_{v_0}=(V,E)$ with a distinguished initial vertex $v_0\in V$ \emph{nice}.
As a warm-up, we have to show that the restriction to nice graphs leaves the problem of finding a minimal vertex cover NP-complete.

\begin{lemma}
\label{lem:NPC}
The problem of checking whether a nice graph  $\G_{v_0}$ has a vertex cover of size $k$ is NP-complete.
\end{lemma}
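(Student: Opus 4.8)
The plan is to prove membership in NP directly and then establish NP-hardness by a logarithmic-space reduction from the classical vertex-cover problem on arbitrary simple graphs, which is NP-complete. Membership is immediate: a set $S$ with $|S|\le k$ serves as a certificate that is checkable in polynomial time by verifying $|S|\le k$ and that every edge of $\G_{v_0}$ meets $S$. Hence the work lies entirely in the hardness direction, where the difficulty is that the three extra requirements imposed by niceness---connectedness, non-triviality, and a distinguished vertex---must be forced onto an arbitrary instance without distorting the minimal vertex-cover size in an uncontrolled way.

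First I would fix the construction. Given an arbitrary simple graph $G=(V,E)$, I form $G'=(V\cup\{u\},\,E\cup\{\{u,v\}\mid v\in V\})$ by adding a single fresh apex vertex $u$ adjacent to every original vertex, and declare $v_0=u$ to be the distinguished initial vertex. This $G'$ is simple, it is connected because $u$ is adjacent to all of $V$, and it has at least two vertices, so $\G_{v_0}=G'$ with $v_0=u$ is nice; the construction is plainly computable in logarithmic space. The distinguished vertex is free, non-triviality follows from adding $u$, and connectedness is forced by the universal adjacency of $u$, so niceness is bought cheaply. It then remains to relate minimal covers of $G'$ and $G$.

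The heart of the argument is the identity $\tau(G')=\tau(G)+1$, where $\tau$ denotes minimum vertex-cover size. For $\tau(G')\le\tau(G)+1$, observe that if $S$ covers $G$ then $S\cup\{u\}$ covers $G'$, since every new edge meets $u$. For the reverse inequality I would argue by cases on a minimum cover $S'$ of $G'$. If $u\in S'$, then $S'\smallsetminus\{u\}$ still covers every edge of $E$, so $|S'|-1\ge\tau(G)$. If $u\notin S'$, then every new edge $\{u,v\}$ must be covered by $v$, forcing $V\subseteq S'$ and hence $|S'|\ge|V|$; combined with the elementary bound $\tau(G)\le|V|-1$ (for any vertex $v$, the set $V\smallsetminus\{v\}$ already covers $G$), this again yields $|S'|\ge\tau(G)+1$. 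Thus $\tau(G')=\tau(G)+1$, so $G$ has a cover of size at most $k$ iff $G'$ has one of size at most $k+1$, completing the reduction. The only place demanding care---and the main obstacle---is precisely the case $u\notin S'$: one must ensure that being forced to take all of $V$ cannot accidentally beat $\tau(G)+1$, and the bound $\tau(G)\le|V|-1$ is exactly what rules that out.
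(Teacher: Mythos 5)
Your proof is correct and follows essentially the same strategy as the paper: membership in NP is immediate, and hardness comes from a logspace reduction from the general vertex-cover problem that adds an apex vertex adjacent to everything to force connectedness and non-triviality, shifting $k$ to $k+1$. The only difference is in the gadget: the paper adds \emph{two} fresh vertices $v,v'$ with $v$ adjacent to all of $V\cup\{v'\}$, so the edge $\{v,v'\}$ (neither endpoint of which covers any original edge) immediately forces one extra vertex into any cover and gives $\tau(\G_v)=\tau(G)+1$ with no case analysis; your single-apex version instead needs the auxiliary bound $\tau(G)\le |V|-1$ to dispose of the case $u\notin S'$, which you correctly supply. One pedantic caveat for your variant: if $V=\emptyset$ the resulting graph has a single vertex and is not nice, so you should either exclude that degenerate instance (vertex cover trivially remains NP-complete on non-empty graphs) or note it; the paper's two-vertex gadget sidesteps this automatically.
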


\begin{proof}
As a special case of the vertex cover problem, it is in NP, and the problem of finding a vertex cover of size $k$ for a graph $(V,E)$ can be reduced to the problem of checking if the nice graph $\G_v=(V \uplus \{v,v'\}, E \cup \big\{\{w,v\}\mid w \in V \uplus \{v'\}\big\}$ has a vertex cover of size $k+1$:
A vertex cover of $\G_v$ must contain a vertex cover of $(V,E)$ and $v$ or $v'$, and a vertex cover of $(V,E)$ plus $v$ is a vertex cover \linebreak of $\G_v$.
\qed
\end{proof}

We define the \emph{characteristic language} $\scL(\G_{v_0})$ of a nice graph $\G_{v_0}$ as the $\omega$-language over $V_\natural=V\uplus\{\natural\}$ (where $\natural$ indicates a stop of the evaluation in the next step---it can be read `stop') consisting of
\begin{enumerate}
\item all $\omega$-words of the form ${v_0}^*{v_1}^+ {v_2}^+ {v_3}^+ {v_4}^+ \ldots \in V^\omega$ with $\{v_{i-1},v_i\}\in E$ for all $i \in \mathbb N$, (words where $v_0,v_1,v_2,\ldots$ form an infinite path in $\G_{v_0}$), and 

\item all $\omega$-words starting with ${v_0}^*{v_1}^+ {v_2}^+ \ldots {v_n}^+ \natural v_n \in {V_\natural}^*$ with $n \in \mathbb N_0$ and $\{v_{i-1},v_i\}\in E$ for all $i \in \mathbb N$.
(Words where $v_0,v_1,v_2,\ldots,v_n$ form a finite---and potentially trivial---path in $\G_{v_0}$, followed by a $\natural$ sign, followed by the last vertex of the path $v_0,v_1,v_2,\ldots,v_n$.)
\end{enumerate}
We call the $\omega$-words in (1) \emph{trace-words}, and those in (2) \emph{$\natural$-words}. The trace-words are in $V^\omega$, while the $\natural$-words are in ${V_\natural}^\omega \smallsetminus V^\omega$.

Let $\B$ be a deterministic \buchi\ automaton that recognises the characteristic language of $\G_{v_0}=(V,E)$.
We call a state of $\B$
\begin{itemize}
\item a \emph{$v$-state} if it can be reached upon an input word ${v_0}^*{v_1}^+ {v_2}^+ \ldots {v_n}^+ \in {V_\natural}^*$, with $n \in \mathbb N_0$ and $\{v_{i-1},v_i\}\in E$ for all $i \in \mathbb N$, that ends in $v=v_n$ (in particular, the initial state of $\B$ is a $v_0$-state), and
\item a \emph{$v\natural$-state} if it can be reached from a $v$-state upon reading a $\natural$ sign.
\end{itemize}
We call the union over all $v$-states the set of \emph{vertex-states}, and the union over all $v\natural$-states the set of $\natural$-states.

\begin{lemma}
Let $\mathcal G_{v_0}=(V,E)$ be a nice graph with initial vertex $v_0$, and let $\B=(V,Q,q_0,\delta,F)$ be a deterministic \buchi\ automaton that recognises the characteristic language of $\G_{v_0}$.
Then (1) the vertex- and $\natural$-states of $\B$ are disjoint, and, for all $v,w \in V$ with $v\neq w$, (2) the $v$-states and $w$-states and (3) the $v\natural$- and $w\natural$-states are disjoint.
For each vertex $v\in V$, there is (4) a $v\natural$-state and (5) a rejecting $v$-state, and (6), for every edge $\{v,w\}\in E$, there is an accepting $v$-state or an accepting $w$-state.
\end{lemma}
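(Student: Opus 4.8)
The backbone of the plan is to compute the residual languages $\scL(\B_q)$ of the two kinds of states and to read all six assertions off these residuals. First I would fix, for each state under consideration, a witnessing access word: a $v$-state $q$ comes with a $\natural$-free path word $u$ ending in $v$, and a $v\natural$-state comes with $u\natural$ for such a $u$. For a $v\natural$-state the residual is pinned down completely: since $u$ contains no $\natural$ and ends in $v$, the word $u\natural\gamma$ can only be a $\natural$-word, and it is one exactly when $\gamma$ begins with the letter $v$; hence $\scL(\B_q)=v\cdot{V_\natural}^\omega$. For a $v$-state I would not need the full residual, only its reaction to a leading $\natural$: $\natural v\gamma \in \scL(\B_q)$ for every $\gamma$, whereas $\natural x\gamma \notin \scL(\B_q)$ for every $\gamma$ and every letter $x \neq v$ (vertex or $\natural$), again directly from the definition of $\natural$-words. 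Both residuals are proper and nonempty, so these states lie in $Q$ rather than being $\top$ or $\bot$.

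With these residuals the disjointness statements (1)--(3) become immediate. For (2), a state that were simultaneously a $v$- and a $w$-state with $v\neq w$ would have both to accept and to reject words beginning $\natural w$. For (3), two $\natural$-states for distinct vertices would carry the distinct residuals $v\cdot{V_\natural}^\omega \neq w\cdot{V_\natural}^\omega$. For (1), every $\natural$-state accepts only words whose first letter is a vertex, while every vertex-state accepts some word beginning with $\natural$. Assertion (4) is then a reachability remark: as $\G_{v_0}$ is connected, every $v$ is reached from $v_0$ by a path, giving a $v$-state, and reading $\natural$ from it produces a $v\natural$-state.

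The two substantive points are (5) and (6), where I would pass from membership of an infinite word to the acceptance status of a single state via the \buchi\ condition. For (5) I feed $\B$ the word $u v^\omega$ with $u$ a path to $v$; it is rejected because it is $\natural$-free yet eventually constant, and a simple graph has no self-loop, so it is neither a trace-word nor a $\natural$-word. Crucially every prefix $uv^k$ is still a path word ending in $v$, so the tail of the run stays inside the finitely many $v$-states and is therefore ultimately periodic; rejection forces the states on its limit cycle to avoid $F$, delivering a rejecting $v$-state. For (6) I feed $u(wv)^\omega$ along the edge $\{v,w\}$: this is a genuine infinite path, hence an accepted trace-word, and from position $|u|$ on its run alternates between $v$-states and $w$-states, so the states visited infinitely often all lie in these two classes; acceptance forces one of them into $F$, giving an accepting $v$- or $w$-state.

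The main obstacle is exactly this localisation step in (5) and (6): one has to argue that the run never escapes the relevant vertex-state classes (so it cannot drift off to $\top$, to $\bot$, or to unrelated states) and that finiteness of each class turns the global \buchi\ verdict into a statement about a concrete reachable cycle. The residual computation of the first paragraph is what keeps everything honest here, since it certifies that the states traversed are genuine, non-sink states for which membership in $F$ is meaningful.
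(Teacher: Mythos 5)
Your proposal is correct and follows essentially the same route as the paper: it distinguishes the state classes via the words $v^\omega$ and $\natural v\gamma$ (you phrase this as a residual-language computation, the paper just exhibits the single distinguishing words), uses connectivity of $\G_{v_0}$ for existence, and extracts the rejecting $v$-state and the accepting $v$- or $w$-state from the runs on $uv^\omega$ and $u(wv)^\omega$ exactly as the paper does, including the observation that these runs stay within the relevant vertex-state classes and cannot block. No gaps.
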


\begin{proof}
\begin{enumerate}
\item Let $q_v^\natural$ be a $v\natural$-state and $q$ a vertex-state.
As $\B$ recognises $\scL( \G_{v_0})$, $\B_{q_v^\natural}$ must accept $v^\omega$, while $\B_{q}$ must reject it.

\item Let $q_v$ be a $v$-state and let $q_w$ be a $w$-state with $v\neq w$.
As $\B$ recognises $\scL( \G_{v_0})$, $\B_{q_v}$ must accept $\natural v^\omega$, while $\B_{q_w}$ must reject it.

\item Let $q_v^\natural$ be a $v\natural$-state and let $q_w^\natural$ be a $w\natural$-state with $v\neq w$.
As $\B$ recognises $\scL( \G_{v_0})$, $\B_{q_v^\natural}$ must accept $v^\omega$, while $\B_{q_w^\natural}$ must reject it.

\item As $\G_{v_0}$ is connected, there is, for every $v\in V$, a path $v_0v_1v_2\ldots v$ in $\G_{v_0}$, and the state reached by $\B$ upon reading $v_1v_2\ldots v \natural$ is a $v\natural$-state.

\item As $\G_{v_0}$ is connected, there is, for every $v\in V$, a path $v_0v_1v_2\ldots v$ in $\G_{v_0}$.
After reading $v_1v_2\ldots v$, $\B$ is in a $v$-state.
$\B$ remains in $v$-states if it henceforth reads $v$'s.
(Note that the automaton cannot block/reject immediately, as it should accept a continuation $\natural v^\omega$ at any time.)
As the word is rejecting, almost all states in the run of the automaton are rejecting $v$-states.

\item Let us consider an arbitrary edge $\{v,w\}$.
As $\G_{v_0}$ is connected, there is a path from $v_0v_1v_2\ldots v$ in $\G_{v_0}$, and $v_1v_2\ldots v (wv)^\omega$ is in $\scL( \G_{v_0})$;
the run of $\B$ on this $\omega$-word is therefore accepting.
As almost all states in this accepting run are $v$-states or $w$-states, there must be an accepting $v$-state or an accepting $w$-state.
\qed
\end{enumerate}
\end{proof}

The sixth claim implies that the set $C$ of vertices with an accepting vertex-state is a vertex cover of $\mathcal G_{v_0}=(V,E)$.
It is also clear that $\B$ has at least $|V|$ rejecting vertex-states, $|C|$ accepting vertex-states, and $|V|$ $\natural$-states:

\begin{corollary}
\label{cor:atLeast}
For a deterministic \buchi\ automaton that recognises the characteristic language of a nice graph $\mathcal G_{v_0}=(V,E)$ with initial vertex $v_0$, the set $C=\{v\in V \mid$ there is an accepting $v$-state$\}$ is a vertex cover of $\mathcal G_{v_0}$, and $\B$ has at least $2|V| + |C|$ \linebreak states.
\qed
\end{corollary}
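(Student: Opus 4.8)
The plan is to read the result straight off the six claims of the preceding Lemma, treating the Corollary as a bookkeeping exercise in counting pairwise-disjoint classes of states. The only genuine content is assembling the disjointness facts so that the individual lower bounds add up rather than overlap; no new argument about runs or languages is needed.

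First, for the vertex-cover assertion I would invoke claim (6): for every edge $\{v,w\}\in E$ there is an accepting $v$-state or an accepting $w$-state, which by the definition of $C$ means $v\in C$ or $w\in C$. Hence every edge of $\G_{v_0}$ has an endpoint in $C$, so $C$ is a vertex cover of $\G_{v_0}$.

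For the state count I would exhibit three classes of states and bound each from below. By claim (5), every vertex $v\in V$ admits a rejecting $v$-state, and by claim (2) the $v$-states and $w$-states are disjoint for $v\neq w$, so these give at least $|V|$ distinct rejecting vertex-states. By the definition of $C$, every $v\in C$ admits an accepting $v$-state, again mutually distinct across vertices by claim (2), giving at least $|C|$ distinct accepting vertex-states. By claim (4), every $v\in V$ admits a $v\natural$-state, and by claim (3) these are distinct across vertices, giving at least $|V|$ distinct $\natural$-states.

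Finally I would verify that the three classes do not overlap: the accepting and rejecting vertex-states are disjoint because no state is simultaneously accepting and rejecting, and the vertex-states are disjoint from the $\natural$-states by claim (1). Summing the three disjoint lower bounds then yields at least $|V|+|C|+|V|=2|V|+|C|$ states. The only point requiring care is precisely that the three bounds count genuinely disjoint sets of states, which is exactly what claims (1), (2) and (3) guarantee; beyond this there is no obstacle, as all the substantive work was carried out in establishing the Lemma.
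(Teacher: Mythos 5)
Your proposal is correct and coincides with the paper's own justification: the paper proves this corollary simply by remarking that claim (6) yields the vertex cover and that claims (1)--(5) give $|V|$ rejecting vertex-states, $|C|$ accepting vertex-states, and $|V|$ $\natural$-states, all pairwise distinct. Your write-up merely makes the disjointness bookkeeping explicit, which is exactly the intended argument.
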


It is not hard to define, for a given nice graph $\mathcal G_{v_0}=(V,E)$ with vertex cover $C$, a \buchi\ automaton $\B^{\G_{v_0}}_C= (V_\natural,(V\times\{r,\natural\}) \uplus (C \times \{a\}),(v_0,r),\delta,(C\times\{a\})\uplus \{\top\})$ with $2|V|+|C|$ states that recognises the characteristic language of $\mathcal G_{v_0}$:
We simply choose
\begin{itemize}
\item $\delta\big((v,r),v'\big) = (v',a)$ if $\{v,v'\} \in E$ and $v'\in C$,

$\delta\big((v,r),v'\big) = (v',r)$ if $\{v,v'\} \in E$ and $v'\notin C$,

$\delta\big((v,r),v'\big) = (v,r)$ if $v=v'$,

$\delta\big((v,r),v'\big) = (v,\natural)$ if $v'=\natural$, and

$\delta\big((v,r),v'\big)=\bot$ otherwise;
\item $\delta\big((v,a),v'\big) = \delta\big((v,r),v'\big)$, and
\item $\delta\big((v,\natural),v\big) = \top$ and  $\delta\big((v,\natural),v'\big) = \bot$ for $v' \neq v$.
\end{itemize}

$\B^{\G_{v_0}}_C$ simply has one $v\natural$-state for each vertex $v\in V$ of $\mathcal G_{v_0}$, one accepting $v$-state for each vertex in the vertex cover $C$, and one rejecting $v$-vertex for each vertex $v\in V$ of $\mathcal G_{v_0}$.
It moves to the accepting copy of a vertex state $v$ only upon taking an edge to $v$, but not on a repetition of~$v$.

\begin{lemma}
\label{lem:correct}
For a nice graph $\G_{v_0}=(V,E)$ with initial vertex $v_0$ and vertex cover $C$, $\B^{\G_{v_0}}_C$ recognises the characteristic language of $\mathcal G_{v_0}$.
\end{lemma}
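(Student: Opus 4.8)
The plan is to establish $\scL(\B^{\G_{v_0}}_C)=\scL(\G_{v_0})$ by reading off from the transition table a complete description of how $\B^{\G_{v_0}}_C$ processes an arbitrary input, and then checking that the words carrying an accepting run are exactly the trace-words and $\natural$-words. First I would record three structural facts that make the rest routine. (i) The transitions out of $(v,a)$ and $(v,r)$ coincide by definition, so the tag $a$ versus $r$ never affects which state (up to tag) is reached; it only records whether the current vertex was just entered by traversing an \emph{edge} into a cover vertex, and it is the sole carrier of acceptance information among the vertex-states. (ii) While the run stays among the proper states $(V\times\{r,\natural\})\uplus(C\times\{a\})$ it simply tracks the current vertex: reading the same vertex again resets the tag to $r$, reading a neighbour $v'$ moves to $(v',a)$ iff $v'\in C$ and to $(v',r)$ otherwise, and reading $\natural$ moves to the waiting state $(v,\natural)$. (iii) The sink $\bot$ is entered precisely when the next letter is neither a repetition of the current vertex, nor a neighbour, nor a correctly placed $\natural$ followed by the matching vertex, and $\top$ is an accepting sink, so once it is reached the run accepts unconditionally.

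With these facts the trace-word case proceeds as follows. Restricting to inputs $\alpha\in V^\omega$, fact (iii) shows the run avoids $\bot$ exactly when $\alpha={v_0}^*{v_1}^+{v_2}^+\ldots$ with $\{v_{i-1},v_i\}\in E$, i.e.\ when $\alpha$ spells a walk in $\G_{v_0}$ with repetitions; any other letter kills the run. Among the surviving inputs, an accepting state $(v_i,a)$ is visited exactly at each step that crosses an edge into a cover vertex, and never on a repetition, so $\alpha$ is accepted iff it crosses infinitely many edges into $C$. If $\alpha$ is a genuine infinite path (a trace-word) I would argue that infinitely many of its vertices lie in $C$: otherwise some tail $v_N v_{N+1}\ldots$ avoids $C$ entirely, whence the edge $\{v_N,v_{N+1}\}$ has neither endpoint in $C$, contradicting that $C$ is a vertex cover; thus $(v_i,a)$ recurs and the run accepts. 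Conversely, a surviving input with only finitely many edge-crossings is eventually of the form $\ldots{v_m}^\omega$, whose run stays in the rejecting state $(v_m,r)$ from some point on and is rejected; hence any accepted $\alpha\in V^\omega$ crosses infinitely many edges and is a trace-word.

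For the $\natural$-words I would track the run up to the first $\natural$. By the surviving-input analysis the prefix before the first $\natural$ must read as ${v_0}^*{v_1}^+\ldots{v_n}^+$ with each $\{v_{i-1},v_i\}\in E$, leaving the run in a $v_n$-state; reading the $\natural$ moves it to $(v_n,\natural)$, from which only the matching letter $v_n$ leads to $\top$ while every other letter leads to $\bot$. Since $\top$ is an accepting sink, the run accepts iff the word continues with $v_n$, i.e.\ iff it starts with ${v_0}^*{v_1}^+\ldots{v_n}^+\natural v_n$ --- precisely the definition of a $\natural$-word (the arbitrary tail being immaterial, as $\top$ absorbs it). As every word is either in $V^\omega$ or contains a $\natural$, combining the two cases yields $\scL(\B^{\G_{v_0}}_C)=\scL(\G_{v_0})$.

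I expect the main obstacle to be the forward direction of the trace-word case, namely that every trace-word is accepted: this is the only step where the hypothesis that $C$ is a vertex cover is genuinely used, and it rests on the pigeonhole-style observation that an infinite path cannot have a cover-free tail. The remaining bookkeeping --- the surviving-input characterisation and the $\natural$-word analysis --- follows directly from the transition table and should be routine.
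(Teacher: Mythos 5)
Your proposal is correct and follows essentially the same route as the paper's proof: both directions are established by reading the run structure off the transition table, with the vertex-cover property used exactly where you predict (to guarantee that a trace-word's run visits $C\times\{a\}$ infinitely often) and $\natural$-words handled by immediate acceptance at $\top$. Your write-up is merely more explicit about the surviving-input characterisation that the paper compresses into ``by the construction of $\B^{\G_{v_0}}_C$''.
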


\begin{proof}
To show $\scL(\B^{\G_{v_0}}_C) \subseteq \scL(\G_{v_0})$, let us consider an $\omega$-word $\alpha$ accepted by $\B^{\G_{v_0}}_C$.
Then it is either eventually accepted immediately when reading a $v$ from a state $(v,\natural)$, or by seeing accepting states in $C\times \{a\}$ infinitely many times.
By the construction of $\B^{\G_{v_0}}_C$, $\alpha$ must be a $v\natural$-word in the first case, and a trace-word in the latter.

To show $\scL(\B^{\G_{v_0}}_C) \supseteq \scL(\G_{v_0})$, it is apparent that $\natural$-words are accepted immediately after reading the initial sequence that makes them $\natural$-words, while a trace-word
${v_0}^{i_0-1} {v_1}^{i_1} {v_2}^{i_2} {v_3}^{i_3} \ldots \in V^\omega$ with $i_j \in \mathbb N$ and $\{v_j,v_{j+1}\}\in E$ for all $j \in \omega$, has the run
$\rho=(v_0,r)^{i_0} (v_1,p_1) (v_1,r)^{i_1-1} (v_2,p_2) (v_2,r)^{i_2-1} (v_3,p_3) \ldots$, with $p_i=a$ (and hence $(v_i,p_i)$ accepting) if $v_i$ in $C$.
As $C$ is a vertex cover, this is at least the case for every second index. (There is no $n\in \mathbb N$ with $\{v_n,v_{n+1}\} \cap C = \emptyset$.)
$\rho$ therefore contains infinitely many accepting states.
\qed
\end{proof}

Corollary \ref{cor:atLeast} and Lemma \ref{lem:correct} immediately imply:

\begin{corollary}
\label{cor:reduce}
Let $C$ be a minimal vertex cover of a nice graph $\mathcal G_{v_0}=(V,E)$. Then $\B^{\G_{v_0}}_C$ is a minimal deterministic \buchi\ automaton that recognises the characteristic language of $\mathcal G_{v_0}$.
\qed
\end{corollary}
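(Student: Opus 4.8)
The plan is to read the statement as a sandwiching argument: exhibit $\B^{\G_{v_0}}_C$ as a correct automaton of a specific size, and then show that no correct automaton can be smaller. First I would observe that Lemma \ref{lem:correct} already certifies that $\B^{\G_{v_0}}_C$ recognises the characteristic language of $\mathcal G_{v_0}$, so it is a legitimate competitor. By the construction preceding that lemma, its state set $(V\times\{r,\natural\}) \uplus (C\times\{a\})$ has exactly $2|V|+|C|$ states, so $\B^{\G_{v_0}}_C$ witnesses that the characteristic language is recognised by \emph{some} DBA with $2|V|+|C|$ states.

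Next I would turn to the matching lower bound. Let $\B'$ be an arbitrary DBA recognising the characteristic language of $\mathcal G_{v_0}$. Corollary \ref{cor:atLeast} applies verbatim: the set $C' = \{v\in V \mid \text{there is an accepting } v\text{-state of } \B'\}$ is a vertex cover of $\mathcal G_{v_0}$, and $\B'$ has at least $2|V|+|C'|$ states. The only substantive step is to link this to the hypothesis on $C$: since $C$ is a \emph{minimum-cardinality} vertex cover and $C'$ is merely \emph{some} vertex cover, we have $|C'|\geq |C|$, and therefore $\B'$ has at least $2|V|+|C'|\geq 2|V|+|C|$ states.

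Combining the two parts, every DBA recognising the characteristic language has at least $2|V|+|C|$ states, while $\B^{\G_{v_0}}_C$ attains exactly this bound; hence $\B^{\G_{v_0}}_C$ is minimal. There is essentially no obstacle here, since the corollary is genuinely immediate from the two cited results. The one point worth flagging is the reading of ``minimal vertex cover'' as minimum cardinality rather than inclusion-minimal: the lower bound $2|V|+|C'|$ is monotone in $|C'|$, so what the argument actually requires is that no vertex cover of $\mathcal G_{v_0}$ is smaller than $C$, which is exactly minimality of cardinality. Under the inclusion-minimal reading the conclusion would fail, so I would make the cardinality convention explicit.
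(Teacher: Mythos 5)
Your proposal is correct and matches the paper exactly: the corollary is stated as an immediate consequence of Corollary \ref{cor:atLeast} (lower bound $2|V|+|C'|$ for any vertex cover $C'$ arising from an accepting automaton, hence $\geq 2|V|+|C|$ by minimum cardinality of $C$) and Lemma \ref{lem:correct} (the witness of size $2|V|+|C|$), with no further argument given. Your remark on reading ``minimal'' as minimum cardinality is the intended convention here, consistent with the reduction from the vertex cover decision problem in Lemma \ref{lem:NPC}.
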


From here, it is a small step to the main theorem of this section:

\begin{theorem}
The problem of whether there is, for a given deterministic \buchi\ automaton, a language equivalent \buchi\ automaton with at most $n$ states is NP-complete.
\end{theorem}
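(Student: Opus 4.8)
The plan is to establish NP-completeness in the two standard steps: membership in NP, and NP-hardness by reduction from the problem shown to be NP-complete in Lemma \ref{lem:NPC}.

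For membership in NP, I would first note that if the target bound $n$ is at least the number of states of the given automaton, the instance is trivially positive, witnessed by the automaton itself; so I may assume $n$ is at most the size of the input. A non-deterministic machine then guesses a \buchi\ automaton $\B'$ with at most $n$ states over the same alphabet---a certificate of size polynomial in the input---and verifies, using the equivalence test of Corollary \ref{cor:NLC}, that $\B'$ is language equivalent to the given automaton. Since equivalence lies in NL $\subseteq$ P, this verification runs in polynomial time, placing the problem in NP.

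For NP-hardness, I would reduce from the problem of deciding whether a nice graph $\G_{v_0}=(V,E)$ has a vertex cover of size $k$. Given such an instance, I construct the \buchi\ automaton $\B^{\G_{v_0}}_V$ obtained from the trivial vertex cover $C=V$, and output the pair $(\B^{\G_{v_0}}_V,\,2|V|+k)$ as an instance of the minimisation problem. By Lemma \ref{lem:correct}, $\B^{\G_{v_0}}_V$ recognises the characteristic language $\scL(\G_{v_0})$, and it has $3|V|$ states, so the reduction is computable in polynomial (indeed linear) time. It then remains to verify correctness, which is where the preceding results combine: letting $C$ be a minimal vertex cover of $\G_{v_0}$, Corollary \ref{cor:atLeast} shows that every deterministic \buchi\ automaton recognising $\scL(\G_{v_0})$ has at least $2|V|+|C|$ states, while Corollary \ref{cor:reduce} shows that $\B^{\G_{v_0}}_C$ attains exactly this number. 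Hence the minimal automaton equivalent to $\B^{\G_{v_0}}_V$ has precisely $2|V|+|C|$ states, and a language equivalent \buchi\ automaton with at most $2|V|+k$ states exists if and only if $|C|\leq k$, i.e.\ if and only if $\G_{v_0}$ has a vertex cover of size (at most) $k$.

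Most of the difficulty has already been absorbed into the earlier lemmas, so the remaining work is chiefly assembly. The one step I would treat most carefully is matching the two bounds exactly---using the lower bound of Corollary \ref{cor:atLeast} for the ``only if'' direction and the explicit construction behind Corollary \ref{cor:reduce} for the ``if'' direction---so that the numeric threshold $2|V|+k$ on the automaton side corresponds faithfully to the vertex-cover threshold $k$ on the graph side. With this correspondence in hand, NP-hardness follows from Lemma \ref{lem:NPC}, and together with membership in NP we obtain NP-completeness.
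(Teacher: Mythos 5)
Your proof is correct and follows essentially the same route as the paper: NP membership by guessing a candidate automaton and checking equivalence via Corollary \ref{cor:NLC}, and NP-hardness by reducing vertex cover on nice graphs (Lemma \ref{lem:NPC}) to the question of whether $\B^{\G_{v_0}}_V$ has an equivalent automaton with $2|V|+k$ states, with correctness supplied by Corollaries \ref{cor:atLeast} and \ref{cor:reduce}. Your added care about bounding the certificate size and about matching the ``at most'' thresholds on both sides is a welcome explicitness that the paper leaves implicit.
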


\begin{proof}
For containment in NP, we can simply use non-determinism to guess such an automaton. Checking that it is language equivalent is then in NL by Corollary \ref{cor:NLC}.

By Corollary \ref{cor:reduce}, we can reduce checking if a nice graph $G_v$ with $m$ vertices has a vertex cover of size $k$ to checking if the deterministic \buchi\ automaton $\B^{G_v}_V$---which has $3m$ states and is easy to construct (in deterministic logspace)---has a language equivalent \buchi\ automaton with $2m+k$ states.
As the problem we reduced from is NP-complete by Lemma \ref{lem:NPC}, this concludes the reduction.
\qed
\end{proof}

As minimising \cobuchi\ automata coincides with minimising the dual \buchi\ automata, the similar claim holds for \cobuchi\ automata.

\begin{corollary}
The problem of whether there is, for a given deterministic \cobuchi\ automaton $\C$, a language equivalent \cobuchi\ automaton with at most $n$ states is NP-complete.
\qed
\end{corollary}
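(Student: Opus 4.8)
The plan is to exploit the duality between deterministic \buchi\ and \cobuchi\ automata, turning the preceding theorem into the desired statement via a state-preserving mapping reduction. First I would make the dualisation precise: given a DBA $\B=(\Sigma,Q,q_0,\delta,F)$ with priorities in $\{1,2\}$, increment every priority by one so that the image lies in $\{2,3\}$, and read the result as a \cobuchi\ automaton $\B^{\mathrm{dual}}$ on the same state set $Q$, the same initial state, and the same transition function $\delta$. Since a parity run is accepting exactly when the highest priority occurring infinitely often is even, a uniform shift by one flips the parity of this maximum on every run, so $\scL(\B^{\mathrm{dual}})=\Sigma^\omega\smallsetminus\scL(\B)$, and the operation is plainly involutive. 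The only bookkeeping concerns the sinks: $\top$ (even, accepting) becomes odd and hence rejecting, and $\bot$ (odd, rejecting) becomes accepting, so I would relabel them, exchanging the roles of $\top$ and $\bot$; this leaves $Q$, and therefore the state count, untouched.

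Containment in NP is immediate and identical to the \buchi\ case: guess a candidate \cobuchi\ automaton with at most $n$ states and verify language equivalence in NL by Corollary \ref{cor:NLC}. For NP-hardness I would reduce from the DBA-minimisation problem shown NP-complete in the preceding theorem. Given an instance $(\B,n)$, the reduction outputs $(\B^{\mathrm{dual}},n)$, which is computable in deterministic logspace as it merely relabels priorities and two sink states. Correctness rests on the fact that dualisation sends equivalent automata to equivalent automata while preserving the number of states, in both directions: if $\B'$ is a DBA with at most $n$ states and $\scL(\B')=\scL(\B)$, then $(\B')^{\mathrm{dual}}$ is a DCA with at most $n$ states and $\scL\big((\B')^{\mathrm{dual}}\big)=\Sigma^\omega\smallsetminus\scL(\B')=\Sigma^\omega\smallsetminus\scL(\B)=\scL(\B^{\mathrm{dual}})$, and the converse follows by applying dualisation once more and invoking that it is an involution. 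Hence $\B$ admits an equivalent DBA with at most $n$ states iff $\B^{\mathrm{dual}}$ admits an equivalent DCA with at most $n$ states, which completes the reduction.

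The one point that needs genuine care—and the only place where a slip could occur—is checking that the priority shift complements the language uniformly, including on the immediate-acceptance and immediate-rejection sinks, and that exchanging $\top$ with $\bot$ keeps the automaton syntactically legal, so that the required fixpoint conditions $\delta(\top,\sigma)=\top$ and $\delta(\bot,\sigma)=\bot$ survive the relabelling. Everything else is routine, and I expect no substantive obstacle beyond this bookkeeping.
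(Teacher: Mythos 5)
Your proposal is correct and follows exactly the route the paper takes: the paper disposes of the corollary with the single remark that minimising \cobuchi\ automata coincides with minimising the dual \buchi\ automata, and your argument is simply that duality spelled out in full (the priority shift, the swap of $\top$ and $\bot$, the state-preserving complementation, and the NP membership via Corollary \ref{cor:NLC}). The bookkeeping you flag is handled correctly, so there is nothing to add.
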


The problem of minimising deterministic parity automata cannot be easier than the problem of minimising \buchi\ automata, and the `in NP' argument that we can simply guess a language equivalent DPA and then inexpensively check correctness (by Corollary \ref{cor:NLC}) extends to parity automata.

\begin{corollary}
The problem of whether there is, for a given parity automaton, a language equivalent parity automaton with $n$ states is NP-complete.
\qed
\end{corollary}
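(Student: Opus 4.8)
The plan is to re-run the reduction from vertex cover on nice graphs exactly as in the \buchi\ case, so the only real work is to check that the lower-bound argument survives the passage from a \buchi\ to a parity acceptance condition. Containment in NP is immediate: a non-deterministic machine guesses a candidate parity automaton with $n$ states and verifies language equivalence, which is in NL by Corollary \ref{cor:NLC}. For hardness I would again use the characteristic language $\scL(\G_{v_0})$ of a nice graph and show that a \emph{deterministic parity} automaton recognising it needs at least $2|V|+k$ states, where $k$ is the size of a minimal vertex cover.

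The notions of $v$-state and $\natural$-state are defined purely through the words that reach a state, so they carry over verbatim to parity automata. Claims (1)--(4) of the structural lemma preceding Corollary \ref{cor:atLeast}---the disjointness of the various state classes and the existence, for every $v\in V$, of a $v\natural$-state---are each proved by exhibiting, for two states, a single word that one must accept and the other must reject. These are statements about the language only and hold for any deterministic $\omega$-automaton; together with connectedness they already force at least $|V|$ pairwise distinct $\natural$-states and at least $|V|$ pairwise distinct vertex-states.

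The main obstacle is that claims (5) and (6) are phrased in terms of \emph{accepting} and \emph{rejecting} states, a dichotomy that is meaningful for \buchi\ but not for parity automata, so I would replace them by a priority argument. Fix a vertex $v$ and note that reading $v$ from a $v$-state always leads to a $v$-state; hence the run on $v^\omega$ started in any $v$-state stays inside the $v$-states, and since $v^\omega \notin \scL(\G_{v_0})$ its highest priority seen infinitely often is odd. Suppose a vertex $v$ had a \emph{single} vertex-state $q_v$; then this run is the self-loop at $q_v$, forcing $\pi(q_v)$ to be odd. If $\{v,w\}\in E$ and both $v$ and $w$ had a single vertex-state, the run on $v_1v_2\ldots v\,(wv)^\omega$ would eventually alternate between $q_v$ and $q_w$, so its highest priority seen infinitely often would be $\max\{\pi(q_v),\pi(q_w)\}$, which is odd---contradicting that this word lies in $\scL(\G_{v_0})$ and must be accepted. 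Consequently the set $S$ of vertices possessing exactly one vertex-state is independent, so $C:=V\smallsetminus S$ is a vertex cover, and every vertex of $C$ contributes at least two vertex-states.

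Putting the counts together gives at least $|V|$ $\natural$-states and at least $|V|+|C|\ge |V|+k$ vertex-states, i.e.\ at least $2|V|+k$ states in total, for any deterministic parity automaton recognising $\scL(\G_{v_0})$. Since the automaton $\B^{\G_{v_0}}_C$ of Lemma \ref{lem:correct} is already a parity automaton (its priorities lie in $\{1,2\}$) and realises the bound $2|V|+|C|$ for a minimal cover $C$, the minimal parity automaton for $\scL(\G_{v_0})$ has exactly $2|V|+k$ states. The reduction behind Corollary \ref{cor:reduce} and Lemma \ref{lem:NPC} then applies unchanged: a nice graph with $m$ vertices has a vertex cover of size $k$ iff its characteristic language is recognised by some deterministic parity automaton with at most $2m+k$ states. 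I expect the delicate point to be exactly the priority bookkeeping in the previous paragraph---checking that the alternating run really visits only $q_v$ and $q_w$ and that ``highest priority seen infinitely often'' behaves as claimed---rather than any new combinatorial ingredient.
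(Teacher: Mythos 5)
Your proposal is correct, but it proves hardness by a genuinely different route than the paper. The paper dispatches the corollary in one line: minimising DPAs ``cannot be easier'' than minimising DBAs, because---as the remark following the corollary points out---increasing the number of priorities can never decrease the number of states, and the priority function of a DPA can be minimised in polynomial time without touching the transition structure \cite{Niwinski+Walukiewicz/98/Hierarchies,Carton+Maceiras/99/ParityIndex}. Hence any DPA recognising the (DBA-recognisable) characteristic language can be re-coloured into a DBA of the same size, and Corollary \ref{cor:reduce} applies verbatim. You instead avoid invoking this external priority-index machinery and re-derive the $2|V|+k$ lower bound directly for parity acceptance: claims (1)--(4) carry over unchanged since they only compare languages of states, and you replace the accepting/rejecting dichotomy of claims (5) and (6) by observing that a vertex with a unique $v$-state forces an odd-priority self-loop on $v$, and that two adjacent such vertices would make the accepted trace-word $v_1\cdots v\,(wv)^\omega$ have an odd maximal priority---so the vertices with only one vertex-state form an independent set and their complement is a vertex cover contributing the extra $|C|\geq k$ states. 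Your priority bookkeeping is sound (the run on $v^\omega$ from a $v$-state stays among $v$-states and must reject; the maximum of two odd priorities is odd), and the upper bound is inherited from Lemma \ref{lem:correct} since $\B^{\G_{v_0}}_C$ is already a parity automaton. The trade-off is clear: the paper's argument is shorter but leans on cited results about the Rabin index, while yours is self-contained and makes explicit that the vertex-cover structure is forced by the parity condition itself, not merely by the B\"uchi special case.
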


Note that, while there is a minimal number of priorities required for every language, the number of states cannot be reduced by increasing the number of priorities, and minimising the number of priorities can be done in polynomial time, changing only the priority functions \cite{Niwinski+Walukiewicz/98/Hierarchies,Carton+Maceiras/99/ParityIndex}.

\section{Relative DFA Minimisation}
\label{sec:tractable}

Minimisation techniques for deterministic finite automata can be used to minimise deterministic \buchi\ and \cobuchi\ automata.
They are cheap---Hopcroft's algorithm works in time $\mathcal O(n \log n)$~\cite{Hopcroft/70/minDFA}---and have proven to be powerful devices for state-space reduction.
From a practical point of view, this invites---in the light of the intractability result for minimising deterministic \buchi\ and \cobuchi\ automata---the question if such tractable minimisation techniques can be used for a space reduction of \buchi\ and \cobuchi\ automata.
From a theoretical point of view, this invites the question of whether there are interesting tractable minimisation problems between the minimisation (or hyper-minimisation \cite{BGS/09/hypermin,Badr/09/hypermin,GJ/09/hypermin,Holzer+Maletti/10/hypermin}) of finite automata, and the minimisation of \buchi\ and \cobuchi\ automata.

Both the theoretical and the practical question turn out to have a positive answer:
An answer to the theoretical question is that we can define \emph{almost equivalence} on automata and their states as a relation, where two automata or states are almost equivalent if their language intersected with the initial sequences of every omega word have finite difference.
We show that a minimal almost equivalent automaton is easy to construct.
Besides being interesting on their own account (for example, if we want to construct a monitor that errs only a bounded number of times for every input word), they are language preserving for deterministic \buchi\ and \cobuchi\ automata.
What is more, a minimal almost equivalent automata to a weak automaton (an automaton that recognises the same language as DBA and DCA) is a minimal language equivalent weak automaton.

From a practical point of view, the algorithm suggests an approximation that is valid for both \buchi\ and \cobuchi\ automata.
There is, however, a simple and apparent improvement of the algorithm when used for the minimisation of \buchi\ and \cobuchi\ automata:
Instead of almost equivalence of states, we can use language equivalence for \buchi\ or \cobuchi\ automata, respectively.
But the algorithm provides for more:
It isolates the minimisation problem within in the SCC.
That is, both precise and approximative minimisation techniques can look into these simpler sub-structures.

While being language preserving when the DBA or DCA is read as a DFA is a sufficient criterion for language preservation of the automaton itself, it is by no means necessary.
In this context it becomes apparent that the NP-completeness result of the previous section may not hint at the fact that state-space reduction for DBAs and DCAs is beyond price; one should rather take it as a hint that a high price might have to be paid for the \emph{additional} benefit one can get from stronger state-space reductions than those for DFAs.

However, even if we consider DFAs, there is at time a desire for stronger reductions than language preserving minimisation.
For this reason, hyper-minimisation, the problem of finding a minimal automaton with a finite symmetrical difference in its language, has been studied for DFAs \cite{BGS/09/hypermin,Badr/09/hypermin,GJ/09/hypermin,Holzer+Maletti/10/hypermin}.
In this section, we introduce relative minimisation where we seek a minimal automaton for which the symmetrical difference intersected with the initial sequences of every infinite word is bounded.
The underlying notion of approximate equivalence is weaker than the $f$-equivalence used for hyper-minimisation, and in my opinion it is also more natural even for DFAs.
(One is often not really interested in differences on words that one never observes.)
It surely is the better starting point for minimising DBAs and DCAs.
We develop a simple algorithm for relative minimisation, and discuss how it can be strengthened to approximate minimal DBAs or DCAs even better.

\paragraph{Almost Equivalence. }
For two (not necessarily different) DFAs $\A^1 = (\Sigma,Q_1,q_0^1,\delta_1,F_1)$ and $\A^2 = (\Sigma,Q_2,q_0^2,\delta_2,F_2)$, we call two states $q_1 \in Q_1$ and $q_2 \in Q_2$ \emph{almost equivalent} if, for all $\omega$-words $\alpha \in \Sigma^\omega$,
it holds that for the runs $r_0^1 r_1^1 r_2^1 r_3^1 \ldots$ and $r_0^2 r_1^2 r_2^2 r_3^2 \ldots$ of $\A^1_{q_1}$ and $\A^2_{q_2}$ on $\alpha$, membership of the states in the final states is equivalent almost everywhere ($\exists n\in \omega.\ \forall i \geq n.\ r_i^1 \in F_1 \Leftrightarrow r_i^2\in F_2$).
Two DFAs are called almost equivalent if their initial states are, and we extend these definitions to DBAs and DCAs.

Obviously, almost equivalence is a congruence and hence defines quotient classes on the states of automata. It is also easy to compute:

\begin{lemma}
Testing almost equivalence (or inequivalence) of two DFAs $\A$ and $\B$ is NL-complete, and the quotient class of a DFA $\A$ can be computed in time quadratic in the size of the automaton.
\end{lemma}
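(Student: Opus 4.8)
The plan is to reduce every part of the lemma to reachability analysis in the product graph of the two automata. Form the product on the state set $Q_{1+}\times Q_{2+}$ with the synchronised transition $\big((p,q),\sigma\big)\mapsto\big(\delta_1(p,\sigma),\delta_2(q,\sigma)\big)$, and call a product state $(p,q)$ a \emph{disagreement state} if exactly one of $p,q$ is final; this convention covers the sinks uniformly, since $\top$ is always final and $\bot$ never is. The central observation I would establish first is a combinatorial characterisation: two states $q_1,q_2$ are \emph{not} almost equivalent if and only if the product graph contains a disagreement state $d$ that is reachable from $(q_1,q_2)$ and lies on a cycle (a \emph{recurrent} disagreement state). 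For the easy direction, given such a $d$ one concatenates a finite path from $(q_1,q_2)$ to $d$ with a cycle through $d$ read $\omega$ many times; the induced word has a product run that revisits $d$ infinitely often, so the two individual runs disagree on final membership infinitely often. Conversely, any word on which the runs disagree infinitely often induces a product run that meets the finite set of disagreement states infinitely often, so some disagreement state $d$ recurs; the segment between two consecutive visits is a cycle through $d$, and $d$ is reachable from $(q_1,q_2)$.

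Given this characterisation, the NL bound follows directly. Almost \emph{in}equivalence is in NL: a non-deterministic machine guesses a disagreement state $d$, verifies in NL that $d$ is reachable from the pair of initial states, and verifies in NL that $d$ can reach itself; both are graph reachability in a logspace-computable graph, and the machine stores only a constant number of product states. Since NL $=$ coNL by Immerman–Szelepcsényi (as already invoked for Corollary \ref{cor:NLC}), almost equivalence lies in NL as well.

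For NL-hardness I would reduce directed reachability. Given a graph $G=(V,E)$ with source $s$ and target $t$, build a DFA whose states are $V$ together with a sink, whose transitions encode the edges of $G$ (one letter per outgoing edge, every other letter leading to the sink), with an extra letter providing a self-loop at $t$; take $\A$ and $\B$ to share this transition structure and initial state $s$, and let them differ only in that $t$ is final in $\A$ but not in $\B$. Then the product run stays on the diagonal, the only reachable disagreement state is $(t,t)$, it is recurrent because of the self-loop, and it is reachable from $(s,s)$ exactly when $t$ is reachable from $s$ in $G$. Hence $\A$ and $\B$ are almost inequivalent iff $t$ is reachable from $s$, and this is a deterministic logspace reduction, so both problems are NL-complete.

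For the quadratic-time computation of the quotient I would apply the same characterisation to the self-product $\A\times\A$, which has $O(n^2)$ states and $O(n^2|\Sigma|)$ edges. Compute its strongly connected components in linear time, mark every disagreement state that lies in a non-trivial SCC (or carries a self-loop) as recurrent, and then run a single backward graph search from the marked set to flag every product state that can reach a recurrent disagreement state. A pair $(p,q)\in Q\times Q$ is almost equivalent precisely when it remains unflagged, and grouping the unflagged pairs yields the quotient classes; every step is linear in the product, hence quadratic in the size of $\A$. I expect the main obstacle to be establishing the characterisation cleanly—in particular the forward direction, the finite-pigeonhole argument that turns ``infinitely many disagreements on some word'' into ``one recurrent disagreement state''—together with the uniform treatment of the sinks $\top,\bot$ inside the product; the remaining SCC and reachability bookkeeping is routine.
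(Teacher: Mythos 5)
Your proposal is correct and follows essentially the same route as the paper: both build the synchronised product whose accepting states are the pairs disagreeing on finality, characterise almost inequivalence as Büchi non-emptiness of that product (equivalently, reachability of a disagreement state lying on a cycle, i.e.\ in a non-trivial SCC), obtain NL-membership via this emptiness check together with NL $=$ coNL, prove hardness by reduction from directed reachability, and get the quadratic quotient computation from an SCC decomposition and a backward reachability pass over the self-product. Your write-up merely spells out the Büchi-emptiness characterisation explicitly where the paper appeals to its Corollary on NL-completeness of emptiness; the substance is identical.
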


\begin{proof}
It is simple to construct in deterministic logspace an automaton $\A\intercal\B$ whose states are ordered pairs of $\A$ and $\B$ states, with the pair of initial states of $\A$ and $\B$ as initial state, whose final states are the pairs of a final and a non-final state (where the final state might be an $\A$ or a $\B$ state).
Two states $q_a$ and $q_b$ are obviously almost equivalent if, and only if, the language of $(\A \intercal \B)_{(q_a,q_b)}$ is empty when read as a DBA, which is in NL by Corollary~\ref{cor:NLC}.
For completeness, it is again easy to reduce the reachability problem of directed graphs to refuting almost equivalence of two automata.

This simple construction also caters for a quadratic deterministic algorithm for finding the quotients of almost equivalent states:
We can construct $\A\intercal\A$ in quadratic time and find the SCCs in $\A\intercal\A$ in time linear in $\A\intercal\A$.
Two states $p,q$ are obviously either almost equivalent or one can reach a final state in a non-trivial SCC from $(p,q)$ in $\A\intercal\A$, and these states can be computed in time linear in $\A\intercal\A$ by a simple fixed-point algorithm.
\qed
\end{proof}

\paragraph{Finding minimal almost equivalent automata is tractable.}
We call the problem of finding a minimal automaton almost equivalent to a DFA $\A$ \emph{relative minimisation}.
Besides the usefulness of relative minimisation for DFAs themselves, let us consider the usefulness of relative minimisation for the state-space reduction of deterministic \buchi\ and \cobuchi\ automata.

\begin{lemma}
\label{lem:coarser}
Two deterministic \buchi\ and \cobuchi\ automata that are, when read as deterministic finite automata, almost language equivalent recognise the same language.
\end{lemma}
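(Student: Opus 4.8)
The plan is to reduce the whole statement to a claim about the two Boolean sequences that record final-state membership along the runs, and then to observe that both the \buchi\ and the \cobuchi\ acceptance conditions are tail properties of such a sequence.

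First I would fix an arbitrary $\omega$-word $\alpha\in\Sigma^\omega$ and write $\rho^1=r_0^1 r_1^1 r_2^1\ldots$ and $\rho^2=r_0^2 r_1^2 r_2^2\ldots$ for the unique runs of the two automata on $\alpha$. Reading the automata as deterministic finite automata, almost language equivalence supplies an index $n\in\omega$ such that $r_i^1\in F_1 \Leftrightarrow r_i^2\in F_2$ for all $i\geq n$; that is, from position $n$ onward the two runs agree about whether the state currently visited is final. Thus the membership bit sequences $b_i^1=[r_i^1\in F_1]$ and $b_i^2=[r_i^2\in F_2]$ differ only on the finite prefix $b_0\ldots b_{n-1}$.

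The second step is to record that each acceptance mode depends only on the tail of its bit sequence. Specialising the parity condition to priorities $\{1,2\}$, the \buchi\ reading accepts $\alpha$ exactly when $b_i=1$ holds for infinitely many $i$; specialising to $\{2,3\}$, the \cobuchi\ reading accepts $\alpha$ exactly when $b_i=0$ holds for only finitely many $i$ (equivalently, cofinitely many states are final). Neither of these properties is affected by altering a finite prefix of the bit sequence, so both are invariant under the eventual agreement established above. Hence $\rho^1$ is accepting if and only if $\rho^2$ is accepting, in either mode, and since $\alpha$ was arbitrary the two automata recognise the same language --- whether both are read as \buchi\ or both as \cobuchi\ automata.

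I do not expect a genuine obstacle here: the entire content is the observation that almost equivalence as finite automata pins down the final-membership sequences up to a finite prefix, while both $\omega$-acceptance conditions are prefix-independent. The only care required is to verify, from the parity definition, that ``the highest priority seen infinitely often is even'' indeed translates to ``infinitely many final states'' in the \buchi\ case and to ``cofinitely many final states'' in the \cobuchi\ case, so that a single tail argument settles both.
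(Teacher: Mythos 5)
Your proof is correct and follows exactly the paper's argument, which the paper compresses into the single sentence that the priorities along the two runs differ in only finitely many positions; you merely spell out that almost equivalence as DFAs forces the final-membership bit sequences to agree on a cofinite tail and that both the B\"uchi and the \cobuchi\ conditions are prefix-independent properties of that sequence. No gap, and no genuinely different route.
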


\begin{proof}
The priority of the states in their runs differs in only finitely many positions.
\qed
\end{proof}

We can therefore use the inexpensive DFA minimisation, hyper-minimisation (which in particular results in an almost equivalent automaton), and the newly introduced relative minimisation of DFAs for a state-space reduction of DBAs and DCAs.
This provides the back-bone for efficient relative minimisation:
To find, for a given DFA $\A = (\Sigma,Q',q_0',\delta'',F')$, a minimal deterministic automaton $\D$ that accepts an almost equivalent language, we execute the following algorithm:

\begin{construction}
\label{algorithm}
In a first step\footnote{This step is not necessary for the correctness of the algorithm or for its complexity.},
we construct the minimal language equivalent automaton $\B= (\Sigma,Q,q_0,\delta,F)$ in quasi-linear time using Hopcroft's algorithm \cite{Hopcroft/70/minDFA}.

For $\B$, we then introduce a pre-order $(Q_+,\succeq)$ on the states of $\B$ such that (1) two states are equivalent if, and only if, they are in the same SCC of $\B$, (2) if $p$ is reachable from $q$ then $p\succeq q$, and (3) $\top$ and $\bot$ are bigger than all states in $Q$.
(This can obviously be done in linear time.)

In a third step, we determine the quotient classes of almost equivalent states of $\B$, and pick, for each quotient class $[q]$, a representative $r_{[q]}\in[q]$ that is maximal with respect to $\succeq$ among the states almost equivalent to $q$.

We then construct an automaton $\C= (\Sigma,Q,r_{[q_0]},\delta',F)$ by choosing the representative $r_{[q_0]}$ of the quotient $[q_0]$ of states almost equivalent to the initial state as new initial state, and changing all transitions that lead to states whose representative is bigger (with respect to $\succeq$) to the representatives of these states.
That is, for $\delta(q,\sigma)=q'$, we get $\delta'(q,\sigma)=q'$ if $q \simeq r_{[q']}$ and $\delta'(q,\sigma)= r_{[q']}$ otherwise.

Finally, we minimise $\C$ using Hopcroft's algorithm again, yielding a DFA $\D$.
\end{construction}

\begin{lemma}
\label{lem:almost}
The DFAs $\A$ and $\D$ of the above construction are almost equivalent.
\end{lemma}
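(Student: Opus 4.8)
The plan is to lean on the fact that almost equivalence is an equivalence relation and reduce the claim to a single nontrivial link. Since language equivalence implies almost equivalence, the minimal automaton $\B$ is almost equivalent to $\A$, and the Hopcroft‑minimised $\D$ is almost equivalent to $\C$ (both are in fact language equivalent to their sources). It therefore suffices to prove the intermediate statement that $\B$ and $\C$ are almost equivalent, and then chain $\A$–$\B$–$\C$–$\D$ by transitivity. Throughout, write $p \approx q$ when $p$ and $q$ are almost equivalent.

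Fix an arbitrary $\alpha = a_0 a_1 \ldots \in \Sigma^\omega$ and let $r_0 r_1 \ldots$ and $s_0 s_1 \ldots$ be the runs of $\B$ (from $q_0$) and $\C$ (from $r_{[q_0]}$) on $\alpha$. First I would establish the invariant $s_i \approx r_i$ for all $i$ by induction. The base case is the choice of representative, $s_0 = r_{[q_0]} \approx q_0 = r_0$. For the step, congruence of $\approx$ gives $\delta(s_i,a_i) \approx \delta(r_i,a_i) = r_{i+1}$; since $s_{i+1}$ is either $\delta(s_i,a_i)$ or its representative $r_{[\delta(s_i,a_i)]}$, in both cases $s_{i+1} \approx \delta(s_i,a_i)$, so transitivity yields $s_{i+1} \approx r_{i+1}$. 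Note this invariant uses only congruence, not the pre-order.

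Second, I would show that the run of $\C$ climbs $\succeq$ and hence eventually rests in a single SCC of $\B$. Writing $q' = \delta(s_i,a_i)$, reachability (condition (2) of the pre‑order) gives $q' \succeq s_i$, so a Case‑A step keeps $s_{i+1}=q' \succeq s_i$ inside the SCC of $s_i$, while a Case‑B step gives $s_{i+1}=r_{[q']} \succeq q' \succeq s_i$ with $s_{i+1}$ strictly above $s_i$ (a different SCC, by the defining condition $s_i \not\simeq r_{[q']}$). As $\succeq$ collapsed to SCCs is a finite partial order, a walk that never descends and strictly ascends whenever it changes SCC changes SCC only finitely often; so from some position $N$ on, every step is a Case‑A step and $s_{i+1}=\delta(s_i,a_i)$. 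Thus the tail $s_N s_{N+1}\ldots$ is exactly the run of $\B_{s_N}$ on $a_N a_{N+1}\ldots$, whereas $r_N r_{N+1}\ldots$ is the run of $\B_{r_N}$ on the same suffix. The invariant gives $s_N \approx r_N$, so these two runs agree on membership in $F$ from some point on, i.e.\ $s_i \in F \Leftrightarrow r_i \in F$ for almost all $i$. Since $\alpha$ was arbitrary, $\B$ and $\C$ are almost equivalent, closing the chain.

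The main obstacle is the inequality $r_{[q']} \succeq q'$ used in the Case‑B step. This is precisely the claim that the chosen maximal representative of $[q']$ dominates $q'$, equivalently that every almost‑equivalence class has a $\succeq$‑\emph{maximum} rather than several incomparable maximal SCCs. I would isolate this as a separate structural lemma and attack it through the product characterisation of almost equivalence from the earlier lemma (that $p \approx q$ iff $\scL\big((\B\intercal\B)_{(p,q)}\big)$ is empty read as a \buchi\ automaton): assuming two $\succeq$‑incomparable maximal elements of a class, I would use maximality to show that the product run from the corresponding pair eventually enters a nontrivial SCC on which acceptance disagrees, contradicting emptiness. Establishing this uniqueness is where the real effort lies; everything else above is bookkeeping. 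The intuition is that the $\succeq$‑maximal, recurrently reachable states of a minimal $\B$ carry the genuine language, so two almost‑equivalent such states are in fact language equivalent and hence identified, forcing a unique maximum.
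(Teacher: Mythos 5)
Your main chain and the key invariants are exactly those of the paper's proof: $\A\sim\B$ and $\C\sim\D$ by Hopcroft, the congruence-based induction giving $s_i\approx r_i$, and the observation that a run of $\C$ never descends in $\succeq$ and strictly ascends at every adjusted transition. Your finish differs slightly but validly: the paper bounds the total number of finality differences by (number of adjusted transitions, at most $|Q|$) times (the bounded number of differences between almost equivalent states), whereas you argue that after the last adjusted transition the tail of $\C$'s run is a genuine run of $\B_{s_N}$ with $s_N\approx r_N$. Either works.

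The problem is the step you yourself flag as the ``main obstacle'', namely $r_{[q']}\succeq q'$, and in particular the way you propose to close it. The structural lemma you want --- that every almost-equivalence class has a $\succeq$-\emph{maximum} when $\succeq$ is the reachability pre-order --- is false. Take a minimal DFA with an accepting state $p$ and a rejecting state $q$, both of whose outgoing transitions all lead to the same state $d$ (with $\scL(\B_d)$ chosen so that no two of these states are language equivalent): then $p$ and $q$ are almost equivalent (their runs agree from position $1$ onwards on every word) but mutually unreachable, so their class has two incomparable maximal elements, and no amount of work with the product automaton will manufacture a maximum. The resolution is not a theorem but a reading of Construction \ref{algorithm}: conditions (1)--(3) do not pin $\succeq$ down to the reachability pre-order, and the intended (and linear-time computable) choice is a \emph{total} pre-order refining it, e.g.\ a topological numbering of the SCCs of $\B$ with $\top,\bot$ on top. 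Under a total pre-order every finite set, in particular every almost-equivalence class, has a maximum up to $\simeq$, so ``maximal among the states almost equivalent to $q'$'' immediately gives $r_{[q']}\succeq q'$ and your Case-B step goes through. (Totality is genuinely needed: with only the reachability order one can arrange two class representatives each of which reaches a non-representative state of the other's class, so that a run takes adjusted transitions infinitely often and the lemma fails.) With that reading your proof is complete and coincides with the paper's; without it, the gap you identified is real but your proposed route to fill it cannot succeed.
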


\begin{proof}
First, $\A$ and $\B$ are language equivalent.

To compare the language of $\B$ and $\C$, we note that, if $p$ and $q$ are almost equivalent, then so are $\delta(p,\sigma)$ and $\delta(q,\sigma)$ for all $\sigma$ in $\Sigma$. (Assuming the opposite, there would be a word $\alpha\in \Sigma^\omega$ for which priority of the runs of $\B_{\delta(p,\sigma)}$ and $\B_{\delta(q,\sigma)}$ differ on infinitely many positions, which implied the same for $\sigma\cdot\alpha$ and runs on $\B_p$ and $\B_q$ and hence lead to a contradiction.)

Let us now consider runs $r_0^b r_1^b r_2^b r_3^b \ldots$ and $r_0^c r_1^c r_2^c r_3^c \ldots$ of $\B$ and $\C$ on some $\omega$-word $\alpha$.
Then $r_i^b$ and $r_i^c$ are almost equivalent for all $i \in \omega$ by the above observation.
Also, states in a run of $\C$ can never go down in the pre-order $(Q_+,\succeq)$. In particular, there is a bounded (at most $|Q|$) number of positions in the run, where $\C$ takes an adjusted transition---a transition $\delta'(q,\sigma) \neq \delta(q,\sigma)$---as this involves going strictly up in $(Q_+,\succeq)$.
The number of positions $i\in \omega$ where either only $r_i^b$ or only $r_i^c$ are final can thus be estimated by the number of changed transitions taken times the bounded number of differences that can occur between almost equivalent states in $\B$.

Finally, $\C$ and $\D$ are again language equivalent.
\qed
\end{proof}

An key observation for the proof that $\D$ is minimal is that almost equivalent states are in the same quotient class.

\begin{lemma}
\label{lem:SCC}
Two states of $\D$ that are almost equivalent are in the same SCC.
\end{lemma}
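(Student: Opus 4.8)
The plan is to transfer the statement from $\D$ back to the intermediate automaton $\C$, prove the analogue there, and then push it forward through the final minimisation. Since $\D$ is the Hopcroft minimisation of $\C$, the canonical projection $\mu$ from the reachable states of $\C$ onto the states of $\D$ is a surjective morphism commuting with $\delta$; hence it maps paths to paths and cannot split a strongly connected set, so images of states lying in one SCC of $\C$ lie in one SCC of $\D$. Thus it suffices to show that two reachable, almost equivalent states of $\C$ lie in one SCC of $\C$: given almost equivalent $\bar p,\bar q$ in $\D$, I would pick reachable preimages $p,q$, and since $\C_p$ is language equivalent (hence almost equivalent) to $\D_{\bar p}$ and likewise for $q$, transitivity of almost equivalence yields that $\C_p$ and $\C_q$ are almost equivalent.

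First I would record two structural facts about $\C$ that follow directly from the construction. (Stay-or-ascend.) Because $r_{[q']}\succeq q'\succeq q$ holds for every transition $\delta(q,\sigma)=q'$, each transition of $\C$ either keeps the $\B$-target and stays inside one SCC of $\B$ (the case $q\simeq r_{[q']}$) or is redirected to $r_{[q']}$ with $r_{[q']}\succ q$ strictly; consequently every cycle, and hence every SCC, of $\C$ is contained in a single SCC of $\B$. (Confinement.) For an almost equivalence class $[q]$ with representative $r$ lying in the SCC $S$ of $\B$, every class member strictly below $r$ in $\succeq$ is unreachable in $\C$: a kept transition into it must originate in $S$ (which would place it in $S$), no ascending transition can target a non-representative, and it is not the initial state $r_{[q_0]}$. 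Hence all reachable states of $[q]$ in $\C$ lie in $[q]\cap S$, and indeed in $R_S=\{y\in S:r_{[y]}\in S\}$.

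The heart of the argument is then to show that the reachable states of a class form a single SCC of $\C$. On $R_S$ the transition relations of $\B$ and $\C$ coincide—every $\B$-edge between $R_S$-states is kept, since the target's representative already lies in $S$—and intra-class edges inside $S$ are never redirected because $r$ is $\succeq$-maximal. So the reachable class states are mutually reachable in $\C$ exactly when they are mutually reachable in the induced graph $\B[R_S]$, and it remains to establish this connectivity. I expect this to be the main obstacle: an $S$-internal path between two class members could a priori leave $R_S$ through a state $y$ whose representative sits in a strictly higher SCC, and such a $y$ is diverted away by the redirection, so one must rule out that these diverted states are genuine cut points separating members of a class. I would attack this by a reachability argument inside $S$, using strong connectivity of $S$ together with the maximality of the representative and the minimality of $\B$ to reroute any such path within $R_S$; and, should a purely graph-theoretic rerouting fail, by appealing to the final Hopcroft step, where the language-equivalence merges reconnect the class in $\D$. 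Once mutual reachability of the reachable class states in $\C$ is in hand, the morphism $\mu$ delivers the claim for $\D$.
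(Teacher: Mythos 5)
Your reduction to $\C$, the stay-or-ascend observation, and the confinement argument all match the paper's proof, which likewise establishes that every class member with an almost equivalent peer in a higher SCC is unreachable in $\C$ and then passes to $\C'$ and $\D$ through the quotient. But your proposal stops exactly at what you yourself call ``the main obstacle'': you never prove that two reachable, almost equivalent states of $\C$ are mutually reachable there, you only list strategies you ``would'' try. That is a genuine gap, and the strategies you name would not close it. A purely graph-theoretic rerouting inside $S$ cannot succeed, because nothing in the graph structure alone (strong connectivity of $S$, maximality of the representative, minimality of $\B$) forbids a strongly connected $S$ in which every path between two members of $[q]\cap R_S$ passes through $S\smallsetminus R_S$; the minimality of $\B$ is in any case a red herring, since the paper notes that the first Hopcroft step is not needed for correctness. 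And the final Hopcroft step merges only language-equivalent states, so it cannot be relied upon to reconnect states that are almost but not language equivalent.

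The missing ingredient is that almost equivalence is a congruence, combined with stay-or-ascend. Let $p,q$ be reachable in $\C$ and almost equivalent; by your confinement both are $\succeq$-maximal in $[q]$ and hence both lie in the top SCC $S$ of that class, so there is a word $w$ with $\delta^*(p,w)=q$ in $\B$. Run $\C$ from $p$ on $w$ and compare with the $\B$-run: as in the proof of Lemma \ref{lem:almost}, the $i$-th state $z_i$ of the $\C$-run is almost equivalent to the $i$-th state of the $\B$-run, so the final state $z_{|w|}$ lies in $[q]$. Since $\C$-runs never descend in $\succeq$ and $q$ is $\succeq$-maximal in $[q]$, the state $z_{|w|}$ cannot lie strictly above $S$, so the entire $\C$-run stays inside $S$; but every $\C$-transition that does not strictly ascend is an unmodified $\B$-transition, so the $\C$-run coincides with the $\B$-run and ends exactly at $q$. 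Hence $q$ is reachable from $p$ in $\C$ (and symmetrically), all intermediate states being reachable and thus surviving in $\C'$; the projection to $\D$ then finishes the proof as you describe. It is worth noting that the paper's own write-up is silent on precisely this step---its ``therefore'' only yields that reachable class members share an SCC of $\B$, not of $\C'$---so your instinct about where the difficulty sits was sound; you just did not supply the argument that resolves it.
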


\begin{proof}
As $\D$ is but the smallest automaton language equivalent to $\C$, we can obtain $\D$ by first constructing a language equivalent DFA $\C'$ from $\C$ by deleting the unreachable states of $\C$ and then joining the language equivalent states.

In $\C$, all states that have an almost equivalent peer in a bigger (by $\succeq$) SCC are unreachable:
In the construction of $\C$, their incoming transitions have been re-routed to the representative of their class, and the initial state has been swapped to the representative.
In $\C'$, all almost equivalent states are therefore in the same SCC.
The same holds true for language equivalent states, as language equivalence is the finer relation.
If two states are connected in $\C'$, the same holds for their quotients of language equivalent states \linebreak in $\D$.
\qed
\end{proof}

The proof that $\D$ is minimal builds on the fact that, whenever we go up in $(Q_+,\succeq)$, we choose the same representative.

\begin{theorem}
\label{theo:rmin}
There is no DFA $\E$ almost equivalent to $\D$ that is strictly smaller than $\D$.
\end{theorem}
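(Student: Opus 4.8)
The plan is to prove the equivalent statement that every DFA $\E$ almost equivalent to $\D$ has at least as many reachable states as $\D$. We may assume $\E$ is trim (deleting unreachable states shrinks $\E$ and changes neither its language nor its almost-language), and $\D$ is trim since it is the output of Hopcroft's algorithm. For a word $u$ let $\D(u)$ and $\E(u)$ denote the states reached on $u$. The backbone is a Myhill--Nerode-style injection, whose defining observation is: \emph{if $\E(u)=\E(w)$ then $\D(u)$ and $\D(w)$ are almost equivalent.} Indeed, for every $\alpha\in\Sigma^\omega$ the runs of $\E$ on $u\alpha$ and on $w\alpha$ share the common tail that $\E$ produces from the state $\E(u)=\E(w)$ on $\alpha$; since $\E$ is almost equivalent to $\D$, each of those tails agrees almost everywhere with the corresponding tail of the run of $\D$, and chaining the two agreements shows that $\D_{\D(u)}$ and $\D_{\D(w)}$ agree almost everywhere on $\alpha$. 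As $\alpha$ was arbitrary, $\D(u)\sim\D(w)$. Consequently the map $\D(u)\mapsto\E(u)$ is well defined up to almost equivalence, and the number of reachable states of $\E$ is at least the number of almost-equivalence classes realised among the reachable states of $\D$.

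It therefore suffices to prove the key claim that \emph{distinct reachable states of $\D$ are pairwise almost inequivalent}, for then every reachable class is a singleton and the bound above reads $|\E|\ge|\D|$. Suppose, for contradiction, that $p\neq q$ are reachable with $p\sim q$. By Lemma~\ref{lem:SCC} they lie in a common SCC $S$ of $\D$, and since $p\neq q$ this SCC is non-trivial. Because $\D$ is language minimal, $p$ and $q$ are language inequivalent, so there is a finite word $z$ with, say, $\delta(p,z)$ final and $\delta(q,z)$ non-final.

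The crux is to turn this single language-difference into a \emph{recurrent} one and so contradict $p\sim q$ through the product-automaton criterion established before Construction~\ref{algorithm}: $p\sim q$ forces every finality-mismatching pair reachable from $(p,q)$ in $\D\intercal\D$ to sit in a trivial SCC. Here the construction of $\D$ enters decisively. Since $p\sim q$, they belong to the same almost-equivalence class and hence were assigned the \emph{same} representative $r_{[p]}=r_{[q]}$; moreover, for every letter $\sigma$ the states $\delta(p,\sigma)$ and $\delta(q,\sigma)$ are again almost equivalent and so carry a common representative. By the re-routing rule, reading any letter either sends $p$ and $q$ to one and the same state (whenever that common representative lies strictly above $S$ in $(Q_+,\succeq)$, both transitions having been redirected identically) or keeps both successors inside $S$. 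Thus all exits from $S$ taken by $p$ and $q$ are synchronised, and any language-difference between them must already be produced by a word that stays inside $S$. Strong connectivity of $S$ then lets one return the pair towards $(p,q)$, placing a finality-mismatching pair inside a \emph{non-trivial} SCC of $\D\intercal\D$ reachable from $(p,q)$ --- contradicting $p\sim q$. (The residual case, where the mismatch is transient, is exactly the situation in which one of $p,q$ would be reachable only as a $\succeq$-strictly-smaller almost-equivalent peer of the other and hence would have been re-routed away in $\C$, as in the proof of Lemma~\ref{lem:SCC}; so it cannot occur for two states that both survive into $\D$.)

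I expect this third paragraph to be the main obstacle: reconciling the \emph{local} freedom that almost equivalence grants to mismatches on a single prefix with the \emph{global} impossibility of collapsing two recurrent, language-inequivalent states, and doing so precisely through the ``same representative when going up'' property of the construction. Once the key claim is in hand, the proof closes immediately: the injection of the first paragraph gives $|\E|\ge|\D|$, so no almost equivalent $\E$ is strictly smaller than $\D$; together with Lemma~\ref{lem:almost} this establishes that $\D$ is a minimal almost equivalent automaton.
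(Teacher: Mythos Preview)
Your central claim---that distinct reachable states of $\D$ are pairwise almost inequivalent---is false, and with it the whole Myhill--Nerode injection collapses to the much weaker bound $|\E|\ge\#\{\text{almost-equivalence classes of }\D\}$, which does not suffice.

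A concrete counterexample: take $\Sigma=\{0,1\}$, states $\{a,b,c\}$ with $a$ final and $b,c$ non-final, transitions $\delta(a,\sigma)=\delta(b,\sigma)=c$ for both $\sigma$, $\delta(c,0)=a$, $\delta(c,1)=b$, initial state $a$. This automaton is already a minimal DFA, its three states form a single SCC, and Construction~\ref{algorithm} leaves it unchanged (no transition is redirected, since the whole automaton is one $\succeq$-class). Yet $a$ and $b$ are almost equivalent: from either, after one step the run is at $c$ and the two runs coincide forever. So $\D$ contains two distinct, reachable, almost equivalent states.

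The step where your argument breaks is the sentence ``Strong connectivity of $S$ then lets one return the pair towards $(p,q)$, placing a finality-mismatching pair inside a non-trivial SCC of $\D\intercal\D$.'' Strong connectivity of $S$ in $\D$ says nothing about the SCC structure of $S\times S$ in the product. In the example, from $(a,b)$ every letter leads to the diagonal $(c,c)$, and the diagonal is absorbing; the single mismatching pair $(a,b)$ sits in a trivial SCC of $\D\intercal\D$, exactly as almost equivalence of $a$ and $b$ demands. Your ``residual case'' parenthetical does not rescue this: neither $a$ nor $b$ is re-routed away, because they live in the same SCC and the construction never redirects transitions within an SCC.

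The paper's proof does not try to show that almost-equivalence classes in $\D$ are singletons. It instead argues directly against a hypothetical smaller $\E$: some class $[q]$ must have strictly fewer representatives in $\E$ than in $\D$; the synchronised-exit property guarantees that the language differences among the $\D$-representatives of $[q]$ are all witnessed by words whose $\D$-run stays in the SCC, so one can pick an $r\in[q]$ in $\D$ that is language-distinguishable (within the SCC) from \emph{every} $e\in[q]$ in $\E$; a pumping argument then visits $r$ infinitely often in $\D$ while producing a fresh acceptance disagreement with whichever state $\E$ is currently in, contradicting almost equivalence. The essential difference is that the paper pumps against $\E$, not inside $\D\intercal\D$.
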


\begin{proof}
For convenience, we now look at quotient classes of almost equivalent states that cover both $\D$ and $\E$ in this proof.

First, as $\D$ is minimal (among the language equivalent automata), all states in $\D$ are reachable.
Let us assume that there is a smaller DFA $\E$ almost equivalent to $\D$.
Then $\E$ must (at least) have the same quotient classes as $\D$, and hence, there must be a particular quotient class $[q]$ of $\D$ (and $\E$), such that there are strictly less representatives of this class in $\E$ than in $\D$.

By the previous lemma, the representatives of quotient classes of almost equivalent states of $\D$ are all in the same SCC.
For trivial SCCs, this implies that there is only one representative in $\D$ and hence at least as many in $\E$.

For non-trivial SCCs, there is a witness of language non-equivalence that does not leave the SCC for all different occurrences.
(Note that Construction \ref{algorithm} guarantees for $\C$ that, once an SCC is left, the target state---and hence the remainder of the run---is the same, no matter from which representative of a quotient class we start. And the proof of the previous lemma showed that the minimisation of $\C'$ is SCC preserving.)

As $\E$ has less representatives, we can pick one representative $r\in [q]$ of this class in $\D$ such that, for all representatives $e \in [q]$ in $\E$, we construct a finite word $\alpha_e \in \Sigma^*$ that is accepted either only by $\E_e$ or only by $\D_r$, such that the run of $\D_r$ on $\alpha_e$ stays in the SCC containing $r$.
This invites a simple pumping argument:
We can construct a word starting with a sequence $\beta_0$ that leads to $r$ in $\D$. It also leads to some state $e_1$ almost equivalent to $r$ in $\E$.
Next, we continue our word with $\alpha_{e_1}$, witnessing a difference.
From the resulting state in $\D$, we continue with a non-empty sequence $\beta_1 \in \Sigma^+$ that brings us back to $r$. (We stay in the same SCC by construction.)
Meanwhile, we have reached some state $e_2$ almost equivalent to $r$ in $\E$.
Next, we continue our word with $\alpha_{e_2}$, witnessing a difference, and continue with a non-empty sequence $\beta_2 \in \Sigma^+$ that brings us back to $r$ in $\D$, and so forth.
We thus create an infinite sequence $\beta_0\alpha_{e_1}\beta_1\alpha_{e_2}\beta_2\alpha_{e_3}\ldots$ with infinitely many differences, which contradicts the almost equivalence of \linebreak $\D$ and $\E$.
\qed
\end{proof}

\begin{corollary}
We can construct a minimal almost equivalent automaton to a given DFA $\A$ in time quadratic in the size of $\A$.
\qed
\end{corollary}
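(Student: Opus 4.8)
The plan is to read the corollary off Construction \ref{algorithm} together with the two preceding results, so that essentially the only thing left to verify is the running time. Construction \ref{algorithm} is an explicit procedure that turns $\A$ into a DFA $\D$; Lemma \ref{lem:almost} already certifies that $\A$ and $\D$ are almost equivalent, and Theorem \ref{theo:rmin} already certifies that $\D$ is minimal among the DFAs almost equivalent to it. Hence $\D$ is a minimal almost equivalent automaton, and the substance of the corollary reduces to bounding the cost of Construction \ref{algorithm} by a quadratic function of the size of $\A$.

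First I would bound the cheap steps. The two invocations of Hopcroft's algorithm (building $\B$ from $\A$, and $\D$ from $\C$) each run in $\mathcal O(n\log n)$, which is subquadratic. The pre-order $(Q_+,\succeq)$ comes from an SCC decomposition of $\B$ together with a topological order of its condensation, both linear in $\B$. Assembling $\C$ only re-routes each transition to the precomputed $\succeq$-maximal representative of its target's quotient class, so it is linear in the number of transitions and therefore linear in the size of the automaton. Since minimisation ensures $|\B|\leq|\A|$, all of these costs are comfortably within the quadratic budget measured in the size of $\A$.

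The dominant step, and the one that forces the quadratic bound, is the computation of the almost-equivalence quotient of $\B$. Here I would invoke the earlier lemma on almost equivalence, whose proof constructs the product automaton with ordered pairs of states in quadratic time, computes its SCCs in time linear in the product, and extracts the classes by a linear fixed-point pass; picking a $\succeq$-maximal representative per class is then linear given the pre-order. The hard part is thus not conceptual but merely checking that the remaining bookkeeping---representative selection and transition re-routing---stays within the quadratic budget already paid for the product construction, and that the product's $\mathcal O(n^2)$ states and $\mathcal O(n^2|\Sigma|)$ transitions are indeed quadratic in the size of $\A$. Once that is confirmed, summing the five steps gives the claimed quadratic running time.
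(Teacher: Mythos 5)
Your proposal is correct and matches the paper's (implicit) argument exactly: the corollary is stated with no separate proof, following directly from Construction \ref{algorithm}, Lemma \ref{lem:almost}, and Theorem \ref{theo:rmin}, with the paper's remark immediately afterwards confirming that the quadratic cost is incurred only in computing the quotients of almost equivalent states via the product automaton. Your accounting of the individual steps is consistent with what the paper intends.
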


Note that the quadratic cost occurs only for constructing the quotients of almost equivalent states.
Hence, there is a clear critical path, and improvement on this path would lead to an improvement of the overall algorithm.

It is interesting to observe that the minimal automaton almost equivalent to a weak automaton (when read as a DFA) obtained by Construction \ref{algorithm} is weak, and a language equivalent weak automata is almost equivalent.
(An automaton is called weak if it recognises the same language when read as a DBA or as a DCA, or, similarly, if all states in the same SCC have the same priority.)

\begin{theorem}
\label{theo:wmin}
The algorithm from Construction \ref{algorithm} can be used to minimise weak automata.
\end{theorem}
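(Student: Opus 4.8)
The plan is to show that the output $\D$ of Construction~\ref{algorithm}, read as a \buchi\ (equivalently \cobuchi) automaton, is a minimal weak automaton recognising the $\omega$-language $\scL(\A)$. I would split this into three claims: (i) $\D$ is weak, i.e.\ every SCC of $\D$ is uniformly final or uniformly non-final; (ii) $\D$ recognises $\scL(\A)$; and (iii) no weak automaton recognising $\scL(\A)$ has fewer states than $\D$. The single observation driving all three is that on a \emph{weak} automaton the acceptance status of every run eventually stabilises: the states a deterministic run visits infinitely often form a strongly connected set, hence lie in one SCC, and after finitely many steps the run never leaves this SCC; since the automaton is weak that SCC is uniformly final or non-final, so the run is eventually final everywhere if $\alpha$ is accepted and eventually non-final everywhere if $\alpha$ is rejected. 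In particular, acceptance is determined purely by the eventual constant status.

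For (i) I would argue by contradiction. Suppose $\D$ has an SCC $S$ containing both a final and a non-final state; then $|S|\geq 2$, so $S$ is non-trivial. Since every state of $\D$ is reachable (it is minimal among the language equivalent automata, as noted in the proof of Theorem~\ref{theo:rmin}), I can pick a word $\alpha$ whose $\D$-run reaches $S$ and then cycles through both states forever, visiting final and non-final states infinitely often. By Lemma~\ref{lem:almost}, $\A$ and $\D$ are almost equivalent; but the run of $\A$ on $\alpha$ stabilises by the observation above, whereas the run of $\D$ does not, yielding disagreement at infinitely many positions and contradicting almost equivalence. Hence $\D$ is weak.

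For (ii), Lemma~\ref{lem:almost} gives that $\A$ and $\D$ are almost equivalent as DFAs, so by Lemma~\ref{lem:coarser} they recognise the same \buchi\ language; as $\A$ and $\D$ are both weak this coincides with their common \cobuchi\ language, so $\scL(\D)=\scL(\A)$. For (iii), let $\E$ be any weak automaton with $\scL(\E)=\scL(\A)$. Applying the observation to $\A$ and to $\E$, on every $\alpha$ both runs stabilise, and in the \emph{same} status, since that status depends only on whether $\alpha\in\scL(\A)$; thus $\A$ and $\E$ are almost equivalent as DFAs. Because almost equivalence is transitive and $\A$ is almost equivalent to $\D$, so are $\E$ and $\D$, and Theorem~\ref{theo:rmin} forces $|\E|\geq|\D|$. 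Together with (i) and (ii) this shows $\D$ is a minimal weak automaton for $\scL(\A)$.

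The main obstacle is isolating the right bridge lemma---that weakness forces the acceptance status of every run to stabilise eventually---and noticing that it does double duty: through almost equivalence to the weak $\A$ it forces $\D$ itself to be weak, and it simultaneously pushes every competing weak automaton into the almost-equivalence class of $\A$. Once this is secured the theorem is assembled purely from Lemmas~\ref{lem:almost} and~\ref{lem:coarser} and the relative-minimality Theorem~\ref{theo:rmin}, with no need to revisit the internal stages $\B$ and $\C$ of Construction~\ref{algorithm}.
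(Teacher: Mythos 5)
Your proposal is correct and follows essentially the same route as the paper: the paper's proof likewise reduces the theorem to the two facts that language equivalent weak automata are almost equivalent (your claim (iii), via the stabilisation of runs in a terminal SCC) and that any automaton almost equivalent to a weak automaton is itself weak (your claim (i), via the same contradiction with a mixed SCC). You merely make explicit the final assembly through Lemmas \ref{lem:coarser} and \ref{lem:almost} and Theorem \ref{theo:rmin}, which the paper leaves implicit.
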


\begin{proof}
It suffices to show that two language equivalent (when read as DBAs or DCAs) weak automaton $\A$ and $\B$ are almost equivalent and that every almost equivalent automaton to $\A$ is weak.

As $\A$ and $\B$ are weak, the runs of an arbitrary input word $\alpha\in \Sigma^\omega$ will eventually always reside in the same SCC.
As $\A$ and $\B$ are language equivalent, all states in this SCC are either accepting (both for the respective SCC of $\A$ and of $\B$) or rejecting.
Hence, the finality of the states in the run may only differ on a finite prefix.

Let $\C$ be an automaton almost equivalent to a weak automaton $\A$.
Assuming that $\C$ is not weak, it has a (reachable) SCC that contains accepting and rejecting states.
It is simple to exploit this for constructing an input word $\alpha$ and a run $\rho$ of $\C$ on $\alpha$ such that $\rho$ contains infinitely many accepting and infinitely many rejecting states.
As $\A$ is weak, a run $\rho'$ of $\A$ on $\alpha$ will eventually always reside in the same SCC of $\A$, which contains only accepting or only rejecting states.
Hence, only finitely many states in $\rho'$ are accepting or only finitely many states in $\rho'$ are rejecting, which contradicts the assumption of almost equivalence to $\rho$.
\qed
\end{proof}

Almost equivalence can hence be read as a generalisation of language equivalence of weak automata.

\paragraph{Space Reduction for DBAs and DCAs. }
The techniques introduced for finding minimal almost equivalent automata can easily be adjusted to stronger state-space reductions for DBAs and DCAs:
If we use language equivalence for the respective automata instead of almost equivalence, the resulting automaton remains language equivalent.

\begin{theorem}
\label{theo:lmin}
Swapping quotients of almost equivalent states for the coarser quotients of language equivalent states for DBAs and DCAs in Construction \ref{algorithm} provides a language equivalent automaton $\D$, and the cost remains quadratic in the size of $\mathcal A$.
\end{theorem}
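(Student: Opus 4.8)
The plan is to mirror the correctness argument of Lemma \ref{lem:almost}, but to exploit the additional strength of language equivalence (as DBA/DCA) over almost equivalence in order to conclude full language equivalence rather than merely almost equivalence. Throughout, write $p\approx q$ for ``$\B_p$ and $\B_q$ recognise the same $\omega$-language when $\B$ is read as a DBA (resp.\ DCA)''. Two preparatory facts are needed. First, $\approx$ is a congruence: $p\approx q$ implies $\delta(p,\sigma)\approx\delta(q,\sigma)$ for every $\sigma$, because $\scL(\B_{\delta(p,\sigma)})$ is the left quotient $\sigma^{-1}\scL(\B_p)$, and a single prefix letter does not affect which priorities occur infinitely often. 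Second, DFA minimisation (used to pass from $\A$ to $\B$ and from $\C$ to $\D$) preserves the DBA/DCA language, since DFA language equivalence is strictly finer than $\approx$: states with identical finality on every finite continuation induce identical runs, hence identical acceptance, on every infinite word. Given these, it suffices to prove $\scL(\C)=\scL(\B)$ when $\C$ is built from $\B$ as in Construction \ref{algorithm} but using $\approx$-quotients and $\approx$-maximal representatives.

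Next I would carry out the position-wise comparison. Fix $\alpha=a_0a_1\ldots$ and let $r_0^b r_1^b\ldots$ and $r_0^c r_1^c\ldots$ be the runs of $\B$ and $\C$. By induction on $i$ I would show $r_i^b\approx r_i^c$, using the congruence together with the fact that every representative satisfies $r_{[q']}\approx q'$: in the non-adjusted case $r_{i+1}^c=\delta(r_i^c,a_i)\approx\delta(r_i^b,a_i)=r_{i+1}^b$, and in the adjusted case $r_{i+1}^c=r_{[\delta(r_i^c,a_i)]}\approx\delta(r_i^c,a_i)\approx r_{i+1}^b$; the base case holds since $r_0^c=r_{[q_0]}\approx q_0=r_0^b$. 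This step is structurally identical to Lemma \ref{lem:almost}, only with $\approx$ in place of almost equivalence.

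The decisive observation is that acceptance of a DBA/DCA depends only on the tail of the run. As in Construction \ref{algorithm}, every adjusted transition moves strictly upward in the well-founded pre-order $(Q_+,\succeq)$, so the run of $\C$ takes at most $|Q|$ adjusted transitions; let $N$ be a position after the last one. From $N$ onward $\C$ follows $\delta$, so its tail $r_N^c r_{N+1}^c\ldots$ is exactly the run of $\B_{r_N^c}$ on $a_Na_{N+1}\ldots$. Hence $\alpha\in\scL(\C)$ iff $a_Na_{N+1}\ldots\in\scL(\B_{r_N^c})$, while $\alpha\in\scL(\B)$ iff $a_Na_{N+1}\ldots\in\scL(\B_{r_N^b})$; since $r_N^c\approx r_N^b$ these coincide, yielding $\scL(\A)=\scL(\B)=\scL(\C)=\scL(\D)$. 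I expect this tail argument---combining the congruence (to keep the two runs $\approx$-related at every position) with the well-foundedness of $\succeq$ (to force the run eventually back onto a genuine $\B$-run)---to be the main point, and precisely where almost equivalence would fail: $\approx$-related states need not be almost equivalent, so Lemma \ref{lem:almost} cannot simply be reused and a fresh, tail-based argument is required.

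For the complexity claim, the only step whose cost changes is the computation of the quotient classes, now of $\approx$ rather than of almost equivalence. Building the product $\B\times\B$ costs $\mathcal O(|Q|^2)$, and the inequivalent pairs are extracted by a \buchi/\cobuchi\ emptiness analysis on this product: in the \buchi\ case, $(p,q)$ is $\approx$-inequivalent exactly when a difference-witnessing cycle---reachable from $(p,q)$ and carrying a $p$-final but no $q$-final state, or the symmetric condition---exists. Such cycles are found by removing the relevant final states, taking SCCs, marking those hitting the opposite-final component, and performing backward reachability, all linear in the product and hence quadratic in $\A$; the \cobuchi\ case is dual and already covered by the DBA/DCA duality used earlier. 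The remaining steps---Hopcroft's algorithm for $\B$ and $\D$, the pre-order, and the re-routing---are as in Construction \ref{algorithm} and are dominated by this quadratic term, so the overall cost stays quadratic in the size of $\A$.
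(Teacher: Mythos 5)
Your proposal is correct and follows essentially the same route as the paper: the position-wise argument that $r_i^b$ and $r_i^c$ stay language equivalent (the congruence property), combined with the bounded number of strictly-upward adjusted transitions in $(Q_+,\succeq)$ to conclude that the tail of $\C$'s run is a genuine run of $\B$ from a language-equivalent state; your quadratic-time computation of the equivalence classes via the product automaton, SCC decomposition, and backward reachability is likewise the paper's own construction. You merely spell out the tail argument more explicitly than the paper does.
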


\begin{proof}
First, $\A$ and $\B$ as well as $\C$ and $\D$ are language equivalent as finite automata, and hence as DBAs and DCAs (cf.\ Lemma \ref{lem:coarser}).

To compare the language of $\B$ and $\C$, we argue along the same line as in Lemma \ref{lem:almost}:
If two states $p$ and $q$ are language equivalent, so are  $\delta(p,\sigma)$ and $\delta(q,\sigma)$ for all $\sigma$ in $\Sigma$, which implies that, for runs $r_0^b r_1^b r_2^b r_3^b \ldots$ and $r_0^c r_1^c r_2^c r_3^c \ldots$ of $\B$ and $\C$ on some $\omega$-word $\alpha$, $r_i^b$ and $r_i^c$ are language equivalent for all $i \in \omega$.
The claim then follows again from the fact that, for every run of $\C$, the states cannot go down in the pre-order $(Q_+,\succeq)$, and go up every time $\delta'(r_i^c,\sigma_i)\neq\delta(r_i^c,\sigma_i)$ holds.

The complexity remains quadratic: To determine for a DBA the ordered pairs of states for which $\scL(\B_p) \smallsetminus \scL(\B_q)$ is non-empty, we can simply construct a DPA $\scP$ with states and transitions like $\B\intercal\B$, and a priority function that maps pairs $(a,b)$ to priority $3$ if $b$ is accepting, to $2$ if $a$ is accepting while $b$ is not, and to $1$ otherwise.
It now suffices to construct $\scP$, delete the states with priority $3$, determine the non-trivial SCCs, find states in the non-trivial SCCs with priority $2$, and then consider from which states of $\scP$ they are reachable.
(This is but the deterministic version of the construction from Theorem \ref{theo:nlc}.)
Two states $p$ and $q$ are obviously language equivalent if, and only if, $\scL(\B_p) \smallsetminus \scL(\B_q)$ and $\scL(\B_p) \smallsetminus \scL(\B_q)$ are empty.

A similar construction can be devised for DCAs.
\qed
\end{proof}

An interesting corollary from the proofs of Theorems \ref{theo:lmin} and \ref{theo:rmin} is:

\begin{corollary}
\label{cor:SCC}
Minimisation techniques for DBAs or DCAs can treat the individual SCCs of the resulting automaton $\D$ individually.
\qed
\end{corollary}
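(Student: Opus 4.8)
The plan is to show that a minimal language-equivalent DBA (resp.\ DCA) for $\D$ can be assembled from independently minimised copies of the individual SCCs of $\D$, so that any (possibly expensive) minimisation technique may be applied to one SCC at a time. I would build on three structural facts already isolated in the proofs of Theorems \ref{theo:lmin} and \ref{theo:rmin}: that the construction is SCC-preserving, so the SCC decomposition of $\D$ is inherited from $\C$; that every witness distinguishing two states of a common SCC can be chosen so that its run stays inside that SCC; and that once a run leaves an SCC its target state, and hence its whole continuation, is canonical and independent of the representative from which it departed.

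First I would process the SCCs of $\D$ in reverse topological order, starting from the trivial sink components $\bot$ and $\top$. For an SCC $S$ I would define its \emph{local} minimisation problem as reducing the states of $S$ while freezing the transitions that leave $S$, keeping them pointed at their (already fixed) canonical targets in the later SCCs. The core claim is that two states $p,q\in S$ can be merged in a globally language-preserving way exactly when they can be merged in this local problem. The nontrivial direction is that global distinguishability implies local distinguishability: using the witness-stays-in-SCC property together with the pumping argument of Theorem \ref{theo:rmin}, any infinite distinguishing word can be arranged so that its run returns to $S$ after each excursion, so the obstruction to merging $p$ and $q$ lives entirely within $S$ and its frozen exits. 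The converse direction is immediate, since a local merge is in particular a global one.

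It then remains to rule out that a smaller global automaton could arise from merging across SCC boundaries. Here the rigidity of the exit transitions is decisive: because leaving an SCC lands in a canonical state regardless of the representative, any cross-SCC identification either alters this canonical continuation---changing the language, and hence being forbidden---or duplicates an identification already available locally. Consequently a globally minimal automaton must already be minimal on each SCC, and, conversely, gluing locally minimal SCCs together along the canonical exits yields a global automaton of the same size; this is precisely the assertion that the SCCs may be treated individually.

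The step I expect to be the main obstacle is the compositionality argument of the last paragraph: showing that local optimality on every SCC forces global optimality, i.e.\ that there is no joint reduction simultaneously exploiting the interface between two SCCs. The canonical (rigid) exit behaviour is exactly the property that should close this gap, but turning it into a clean argument---that a hypothetical strictly smaller global automaton would have to be strictly smaller on some single SCC, contradicting its local minimality---requires care in tracking how the frozen exit transitions constrain the admissible merges.
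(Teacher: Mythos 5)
Your first two paragraphs rest on exactly the facts the paper reads this corollary off: language-equivalent states of $\D$ lie in a common SCC (the language-equivalence analogue of Lemma \ref{lem:SCC}), and the exit from an SCC is canonical because Construction \ref{algorithm} re-routes every transition leaving an SCC to the unique representative of its target class, so that the continuation of a run after leaving an SCC does not depend on which representative it left from. The paper gives no proof beyond these two observations, and the content it actually uses later (for the greedy $\A^{p\triangleright q}$ merge and for exhaustive search) is the one you state first: candidate identifications are always intra-SCC, and the effect of a modification confined to one SCC can be certified by inspecting that SCC together with its frozen exits. Up to there your proposal is correct and coincides with the paper's implicit argument.

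The gap is in your final, compositional step, and it is twofold. First, the pumping argument of Theorem \ref{theo:rmin} does not transfer to the setting of this corollary: it derives a contradiction from an $\omega$-word on whose run the two automata disagree about finality infinitely often, which refutes \emph{almost} equivalence but is entirely consistent with language equivalence of DBAs or DCAs (this is precisely why DBA equivalence is coarser, cf.\ Lemma \ref{lem:coarser}). So it cannot be invoked to confine a DBA-distinguishing word to a single SCC, nor to lower-bound the number of states an arbitrary equivalent DBA must devote to a given class. Second, your argument that local optimality implies global optimality quantifies only over competitors obtained from $\D$ by identifying states; a strictly smaller language-equivalent DBA need not be a quotient of $\D$ or of any per-SCC modification of it, so excluding ``cross-SCC identifications'' does not exclude all smaller automata. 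Closing this would require a counting argument in the style of Corollary \ref{cor:atLeast} that assigns the states of an \emph{arbitrary} equivalent DBA to the SCCs of $\D$ --- exactly the step you flag as the obstacle and leave open. Note also that whether $p$ and $q$ can be merged in a DBA is not determined by $\scL(\D_p)=\scL(\D_q)$ alone (the vertex-cover gadget of Section 3 sits inside a single SCC for precisely this reason), so your local problems are themselves NP-hard and the equivalence ``mergeable locally iff mergeable globally'' must be phrased for a concrete operation such as $\A^{p\triangleright q}$ rather than an unspecified notion of merging.
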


An interesting aspect of this minimisation is that we can treat a local version of weak automata:
We call an SCC weak if all infinite paths within this SCC are accepting or all infinite paths within this SCC are rejecting.
For weak SCCs, we can obviously make all states accepting or rejecting, respectively, without changing the language of a DBA or DCA.

Doing so in the automaton $\C$ in from Construction \ref{algorithm} leads to all states equivalent by the respective equivalence relation (almost equivalence or language equivalence as DBA or DCA) becoming language equivalent when the automaton is read as a DFA, and are therefore merged in $\D$.
Thus, there is exactly one of these states in $\D$, and the $\D$ is locally optimal.

A further tractable minimisation would be to greedily merge states:
For an automaton $\A$ we denote with $\A^{p\triangleright q}$ the automaton that results from changing the transition function $\delta$ to $\delta'$ such that $\delta'(r,\sigma) = q$ if $\delta(r,\sigma) = p$ and $\delta'(r,\sigma)=\delta(r,\sigma)$ otherwise, choosing $q$ as initial state if $q$ was the former initial state, and removing $p$ from the state-space.
A natural tractable minimisation would be to greedily consider $\A^{p \triangleright q}$ for language equivalent states $p$ and $q$ until no further states can be merged.
Note that, by Corollary \ref{cor:SCC}, it suffices to look at the respective SCCs only, which may speed up the computation significantly.

This is even more important for exhaustive search for minimal automata, such as the SAT based methods suggested by Ehlers~\cite{Ehlers/10/SAT}.

\section{Discussion}
This report has two main results:
First, it establishes that minimising deterministic \buchi, \cobuchi\, and parity automata are NP-complete problems.

A second central contribution is the introduction of relative minimisation of DFAs, a powerful technique to minimise deterministic finite automata when allowing for minor differences in their language.
This natural minimisation problem on DFAs is strictly between the problem of hyper-minimising DFAs and minimising DBAs or DCAs and can be viewed as a generalisation of the minimisation problem of weak automata.
We show that the relative minimisation of DFAs is tractable and provide a simple quadratic algorithm.

Finally, we strengthened this algorithm by relaxing the requirement for merging states from almost to language equivalent states, which provides a promising technique to reduce the state-space of DBAs and DCAs.
This technique does not only have the potential to reduce the  state-space of the automaton significantly, it also suffices to focus on its SCCs when seeking to reduce the state-space of the automaton further.
This can be used to accelerate further reduction heuristics---like the greedy merge discussed---and exhaustive search methods alike.

\end{document}